\newtheorem{theorem}{Theorem}[section]
\newtheorem{corollary}[theorem]{Corollary}
\newtheorem{lemma}[theorem]{Lemma}
\newtheorem{definition}[theorem]{Definition}
\newtheorem{example}[theorem]{Example}
\newcommand{\Z}{{\mathbb{Z}}} \newcommand{\Q}{{\mathbb{Q}}} 
\newcommand{\C}{{\mathbb{C}}} \newcommand{\N}{{\mathbb{N}}}
 \newcommand{\K}{{\mathbb{K}}}
\newcommand{\ord}{\mathrm{ord}}
\newcommand{\gcrd}{\mathrm{GCRD}}
\newcommand{\Hom}{\mathrm{Hom}}
\newcommand{\cS}{\circledS}
\newcommand{\genexp}{\mathrm{GenExp}}
\newcommand{\valg}{\mathrm{ValG}}
\newcommand{\trunc}{\mathrm{Trunc}}
\newcommand{\ind}{\mathrm{Ind}}
\newcommand{\Seq}{\mathbf{S}}
\newcommand{\Gq}{\mathrm{Gquo}}
\newcommand{\SNF}{\mathrm{SNF}}
\begin{document}

\title{Closed form solutions of linear difference equations in terms of symmetric products}

\author{Yongjae Cha\thanks{Supported by the Austrian Science Fund (FWF) under grant P22748-N18}\\
Johannes Kepler University\\
      4040 Linz (Austria)\\
\texttt{ycha@risc.jku.at}}

\maketitle

\begin{abstract}
In this paper we show how to find a closed form solution for third
order difference operators in terms of solutions of second order
operators. This work is an extension of previous results on finding
closed form solutions of recurrence equations and a counterpart to
existing results on differential equations. As motivation and
application for this work, we discuss the problem of proving
positivity of sequences given merely in terms of their defining
recurrence relation. The main advantage of the present approach to
earlier methods attacking the same problem is that our algorithm
provides human-readable and verifiable, i.e., certified proofs.
\end{abstract}

\section{Introduction}
This paper presents an extension of the algorithm {\em solver}~\cite{YC11,CH09,CHG10} that
returns closed form solutions for second order linear difference equations to third order linear 
difference equations.  The solutions that we are looking for are in terms of (finite) sums
of squares. This is motivated by applying the algorithm for proving inequalities on
special functions, i.e., on expressions that may be defined in terms of linear difference
equations with polynomial coefficients. Conjectures about positivity of special functions
inequalities arise in many applications in mathematics and science. Proving them usually
requires profound knowledge on relations between these special functions. It is well known
that there exist many algorithms for proving and finding special function
identities~\cite{Zeil90a,ChyzakDM,AeqB,KoutschHF}. For automated proving of special
functions inequalities only few approaches exist. Gerhold and
Kauers~\cite{GKIneq,MKSumCracker} introduced a method that is based on Cylindrical
Algebraic Decomposition (CAD). This method has been proven to work well on many non-trivial
examples~\cite{MKTuran,VPSI}, but even though correctness is easy to be seen, termination
cannot be guaranteed, hence it is not an algorithm in the strict sense. A first attempt to
clarify the latter issue has been made in~\cite{MKVP10}. One of the features of proofs of
special functions identities is that they usually come with a certificate, i.e., some easy
to check identity that verifies the proof. The CAD-based approach can not hope to have a
similar certificate in the near future. The method presented here is a first step toward
human readable proofs of special functions inequalities, although admittedly a
representation in terms of sums of squares with positive coefficients is not expected to
exist for any given input. Besides this application, the results presented
  are of independent interest as they provide difference case counterparts to results obtained
  for the differential case~\cite{MS85, vH07}. 

First we review the available results in the differential case.  Let $k$ be a differential
field and $L_d \in k[\partial],\partial=d/dx$ be a linear homogeneous third order
differential operator.  Singer~\cite{MS85} characterizes when solutions of $L_d$ can be
written in terms of solutions of a second order operator in $\bar{k}[\partial]$.  Van
Hoeij~\cite{vH07} handles the similar problem when the coefficients of the second order
operator are restricted to $k$ and shows that it will be either of the following cases.
\begin{description}
\item[Case 1] $L_d$ is the symmetric square of a second order operator $K_d \in
  k[\partial]$
\item[Case 2] $L_d$ is gauge equivalent to a symmetric square of a second order operator
  $K_d \in k[\partial]$
\end{description} 
The definitions of symmetric products and gauge equivalence are recalled in sections~\ref{sec:sp}
and~\ref{sec:ge} below.  The algorithm given in~\cite{vH07} returns a second order differential
operator, $K_d \in k[d/dx]$, and a gauge transformation in $k[\partial]$ that sends
solutions of the symmetric squares of $K_d$ to solutions of $L_d$ for Case 2.

In the differential case, the symmetric square of $L_d$ has order~5 if and only if we are
in Case~1. In this case, there is a simple formula that gives $K_d$.  Case~2 is equivalent to the
symmetric square of $L_d$ having order~6 and a first order right-hand side factor
in~$k[\partial]$ as well as a certain conic of $L_d$(\cite[Equation 4.2.1]{MS85}) having a
non-zero solution in~$k$.  Since for $k=\C(x)$ this conic is solvable over $\C(x)$, the
last condition becomes trivial in this case.  The algorithm given in~\cite{vH07} in the
first step checks the order of the symmetric square of $L_d$ to distinguish the cases.

The difference case behaves differently; here we denote by $D=\C(x)[\tau]$ the ring of
linear difference operators, where $\tau$ denotes the shift operator.
Example~\ref{example:symp} shows that the cases can not be distinguished according to the
order of the symmetric squares when the coefficients are in~$\C(x)$. 
To set up a counterpart theorem for difference equations, 
this example shows that we need one more transformation than that in the differential case. 
Furthermore in Case~1, the algorithm for
finding the second order operator is more complicated than in the differential case.


Summarizing, the ideas used in the differential case can not be carried over immediately to
the difference case. Furthermore our aim is to have a closed form solution of the given
input. Hence, if a factorization is found that is not solvable, this fails to satisfy our
goal. Thus we build on the ideas of the algorithm {\em solver}~\cite{YC11,CH09,CHG10}.
Here we say that a function is in closed form if it is a linear combination of elementary
functions, special functions or hypergeometric functions over $\C(x)$.  For instance the
modified Bessel function of the first kind is a closed form solution of the second order
operator~$L_b:=z\tau^2-(2x+2)\tau+x+z$.

The algorithm {\em solver} returns closed form solutions for second order linear
difference operators. The main idea of {\em solver} is to map the given operator~$L_1$ to
an operator~$L_2$ of which a solution is known. This transformation is a bijective map,
called GT-transformation, that sends solutions of $L_1$ to solutions of $L_2$. If a closed
form solution to one of the operators is known, then by means of this transformation the
solution of the second operator can be constructed. For this purpose a table with
second order operators including parameters together with characteristic data (local data)
has been constructed. This local data can be computed for the given operator, the
corresponding equivalent operator is found by table look-up. Then by comparing parameters
of the local data the GT-transformation can be constructed. The characteristic data is
described in Section~\ref{sec:ld}. To cover the extension described here the table has been extended 
so that we can give closed form solutions of certain third order linear difference operators.

\section{Preliminary}
In this section we introduce notations used in this paper and recall some known
facts~\cite{CH09, CHG10, AeqB, PS97} about difference operators. Additionally Cases~1
and~2 above are carried over to the difference case for algebraic extensions in
Theorem~\ref{thm:ord3dec} below.

\subsection{Ring of difference oprators}
Let $D:=\C(x)[\tau]$ be the ring of linear difference operators with coefficients in $\C(x)$, where $\tau$ is the shift
operator acting on $x$ by $ \tau(x)=x+1$. Then $D$ is a noncommutative ring where $$\tau
\cdot \tau^{i-1}=\tau^i \ \text{for} \ i \in \N, \ \tau \cdot f= \tau(f)\tau \ \text{for}
\ f \in \C(x).$$ For $L=a_d(x)\tau^d+\cdots+a_1(x)\tau+a_0(x) \in D$ with $a_d \neq 0$, we
say that $L$ has order~$d$ and write $\ord(L)=d$. If furthermore $a_0\ne0$ then $L$ is
said to be a {\em normal} operator.  In this paper we will assume all operators to be
normal.

The adjoint operator of $L$ is defined by $L^* = \sum_{i=0}^d a_{d-i}(x+i)\tau^i$.
Suppose $L=M\cdot N$ for some $M, N \in D$.  Then $L^*=(M\cdot N)^*=(\tau^{d_1}\cdot
N^*\cdot \tau^{-d_1})\cdot M^*$, where $d_1=\ord(M)$ and thus right-hand side factors of $L$
correspond to left-hand side factors of~$L^*$. We say that a third order operator $L$ is
irreducible in $D$ if both $L$ and $L^*$ have no first order right-hand side factor in $D$.

A second order operator $K=b_2\tau^2+b_1\tau+b_0 \in D$ is called a {\em
    full} operator if $b_2b_1b_0 \neq 0 $.  Thus, if $K$ is a normal but not full
  operator, then $b_1=0$.

\subsection{Ring of sequences}

Let $\C^\N:=\{ f \mid f:\N \rightarrow \C\}$. Then an element $v \in \C^\N$
  corresponds to a sequence $v:=(v(1), v(2), v(3), \ldots)$. $\C$ is embedded in $\C^\N$
  as a subring via constant sequences.  Suppose $v_1, v_2 \in \C^\N$, then we define
\begin{equation*}
\begin{split}
v_1+v_2&:=(v_1(1)+v_2(1), v_1(2)+v_2(2), \ldots ) \\
 v_1v_2&:=(v_1(1)v_2(1), v_1(2)v_2(2), \ldots ).
\end{split}
\end{equation*}
With the above termwise addition and multiplication, $\C^\N$ forms a $\C$-algebra.  We
define the action of $\tau$ on $\C^\N$ by $\tau(v):=(v(2), v(3), v(4), \ldots)$.

Let $\mathbf{S} := \C^{\N}/_\sim$ where $s_1 \sim s_2$ if there exists $N \in \N$ such
that, for all $i > N$, $s_1(i) = s_2(i)$.  Then it is easy to verify that $s$ is a unit in
$\Seq$, i.e. $s$ is invertible in $\Seq$, if and only if $s \in \Seq$ has only finitely
many zeros.  If $f \in \C(x)$, then the image of $f$ in $\Seq$ and the action of $\tau$ on
$\Seq$ are well defined. This way we can embed $\C(x)$ to $\Seq$ and call $s \in \Seq$
rational if there exist $g(x) \in \C(x)$ and $N \in \N$ such that $g(i)=s(i)$ for all $i
\geq N$.  $\Seq[\tau]$ forms a ring of difference operators and $D$ is embedded in
$\Seq[\tau]$.

We say $L(v)=0$ for $v \in \Seq, L=a_d(x)\tau^d+\cdots+a_0(x) \in \Seq[\tau]$ if there is $n_0 \in \N$ such that
$$a_d(i)v(i+d)+a_{d-1}(i)v(i+d-1)+\cdots+a_0(i)v(i)=0 \quad \text{for \ all} \ i \geq n_0.$$

\begin{definition}
$h \in \Seq$ is called hypergeometric if $r=\tau(h)/h \in \Seq \setminus \{0\}$ is rational and 
$r$ is called the certificate of $h.$
\end{definition}
If $h \in \Seq$ is hypergeometric then $(\tau-r)(h)=0$ where $r$ is the certificate of~$h$.
We define $V(L):=\{ u\in \mathbf{S} \mid L(u)=0 \}$.

\begin{theorem}\cite[Theorem 8.2.1]{AeqB}
$\dim_\C(V(L))=\ord(L)$ for a normal difference
operator $L \in D$. 
\end{theorem}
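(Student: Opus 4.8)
The plan is to exhibit a $\C$-linear isomorphism between $V(L)$ and $\C^d$, where $d=\ord(L)$, by ``reading off $d$ consecutive values'' of a solution sequence sufficiently far out. Write $L=a_d\tau^d+\dots+a_0$ with, by normality, both $a_d$ and $a_0$ nonzero in $\C(x)$. Since a nonzero rational function has only finitely many zeros and poles, I would first fix once and for all an integer $n_0$ such that $a_0(i),\dots,a_d(i)$ are all defined and, moreover, $a_d(i)\ne 0$ and $a_0(i)\ne 0$ for every integer $i\ge n_0$; everything below happens ``beyond $n_0$'', which is harmless because $\Seq$ only remembers germs at infinity.

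For surjectivity (the bound $\dim_\C V(L)\ge d$): given $(c_0,\dots,c_{d-1})\in\C^d$, define $u\in\C^\N$ by $u(n_0+j)=c_j$ for $0\le j\le d-1$ (and, say, $u(i)=0$ for $i<n_0$), and then for $i\ge n_0$ set $u(i+d)=-a_d(i)^{-1}\sum_{j=0}^{d-1}a_j(i)u(i+j)$; this is legitimate because $a_d(i)\ne 0$ there, and by construction $L(u)=0$ in the sense used above for elements of $\Seq$, with $u$ realizing the prescribed tuple. For injectivity (the bound $\dim_\C V(L)\le d$): I would show that every element of $V(L)$ has a representative satisfying the recurrence for all $i\ge n_0$ (if the $n_0$ supplied by the definition of $L(u)=0$ is too large, one extends the sequence downwards), and that such a representative is determined by the tuple $(u(n_0),\dots,u(n_0+d-1))$ --- here one propagates an equality that holds far out back down to $n_0$ using the relation solved for its bottom term, $u(i)=-a_0(i)^{-1}\sum_{j=1}^d a_j(i)u(i+j)$, which is available precisely because $a_0(i)\ne 0$. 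Hence the map $\Phi\colon V(L)\to\C^d$, $u\mapsto(u(n_0),\dots,u(n_0+d-1))$, is a well-defined $\C$-linear bijection, giving $\dim_\C V(L)=d=\ord(L)$.

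The routine part is the forward recursion used for surjectivity. The step that genuinely needs care --- and the only place the hypothesis of normality enters --- is the injectivity: because the objects of $\Seq$ are germs at infinity, two solutions agreeing for all large $i$ must be shown to already agree at $n_0,\dots,n_0+d-1$, and this requires the backward recurrence, i.e.\ $a_0\ne 0$. (Dropping normality breaks the statement: e.g.\ $L=\tau$ has order $1$ but $V(L)=\{0\}$.) Once the bookkeeping for the threshold $n_0$ and for the choice of representative is in place, the conclusion is immediate.
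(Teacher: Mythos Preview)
Your argument is correct and is the standard one: choose a threshold $n_0$ beyond all zeros and poles of the leading and trailing coefficients, use the forward recursion (division by $a_d$) to get surjectivity of the initial-values map, and use the backward recursion (division by $a_0$) to show that the germ determines the values at $n_0,\dots,n_0+d-1$, hence injectivity. Your care in noting that well-definedness on germs is the nontrivial point, and that normality is exactly what makes the backward extension work, is appropriate; the counterexample $L=\tau$ is apt.

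As for comparison: the paper does not supply its own proof of this statement---it is quoted as \cite[Theorem 8.2.1]{AeqB} and used without argument. The proof you have written is essentially the one found in that reference (and in most treatments of linear recurrences over $\Seq$), so there is nothing further to contrast.
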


Thus for a normal operator $L \in D$, $V(L)$ forms a $\C$-vector space 
with a basis $\{ v_i \in \Seq \mid 1 \leq i \leq \ord(L) \}$.


\subsection{Term equivalence}\label{sec:sp}

\begin{definition}\label{def:sp}
  The symmetric product, $M \circledS N$, of operators $M$ and $N \in D$ is an
  order-minimal and monic operator such that $\mu \nu \in V(M \circledS N)$ for all $\mu
  \in V(M)$ and $\nu \in V(N)$. 
\end{definition}

There is a simple formula if one of the operators has order~$1$.  Let $L
=a_d(x)\tau^d+\cdots+\dots+a_1(x)\tau+a_0(x) \in D$ and $r(x) \in \C(x)$. Then
\begin{equation}
\label{sym}
\begin{split}
& L \circledS (\tau-r(x))=\displaystyle \sum_{i=0}^d b_i \tau^i, \ \text{where} \ b_d(x)=a_d(x) \ \text{and}\\
& b_i(x)=a_i(x)\displaystyle\prod_{j=i}^{d-1} \tau^j(r(x)) \ \text{for} \ i=0,\ldots,d-1.
\end{split}
\end{equation}
Thus, $(\tau-a(x)) \circledS (\tau-b(x)) =\tau-a(x)b(x)$ for any $a(x), b(x) \in \C(x)$.

Suppose $L \in D$ and $s \in \Seq$. Then the above formula gives an operator $\tilde{L}=L \cS (\tau-s) \in \Seq[\tau]$ such that $V(\tilde{L})=\{ hu \mid L(u)=0 \}$ 
where $h \in \Seq$ is a solution of $\tau-s$. If $L \cS (\tau-s) \in D$ then it is easy to see that $s$ is rational.

\begin{definition}
  $L_1, L_2 \in D$ are said to be term equivalent if there exists $T=\tau-r \in D$ such that
  $V(L_2)=V( L_1 \cS (\tau-r))$, denoted by $L_1 \sim_t L_2$. Such a T is called the term
  transformation from $L_1$ to $L_2$.
\end{definition}

If $L_1$ and $L_2$ are term equivalent and $\tau-r $ is the term transformation then
$V(L_2)=\{ h v \mid h \in V(\tau-r), v \in V(L_1) \}$.  Suppose $L_1$ and a term
transformation $T$ are given, then $L_2$ can be obtained by~\eqref{sym}.

\subsection{Gauge equivalence}\label{sec:ge}
Let $L_1,L_2\in D$ be two given operators, where a closed form solution~$u$ of $L_1$ is
known. If furthermore an operator $G\in D$ can be determined sucht that $G(u)$ is solution
of~$L_2$, then a closed form solution of $L_2$ can be written as a linear combination of
shifts of $u$ over $\C(x)$. Such a transformation~$G$ is called a gauge transformation and
$L_1$ and $L_2$ are said to be gauge equivalent if such a transformation exists.


\begin{definition}\label{def:gt}
  Let $L_1, L_2 \in D$ have the same order. $G \in D$ is called a {\em gauge
    transformation} from $L_1$ to $L_2$ iff $G: V(L_1) \rightarrow V(L_2)$ is a bijection.
\end{definition}

Note that $G$ is not required to be a normal operator.  

Suppose we are given a gauge transformation $G$ where $\ord(G) \geq \ord(L_1)$. Then there exist $Q, \hat{G} \in D$
with $\ord(\hat{G}) < \ord(L_1)$ such that~$G=QL_1+\hat{G}$. The remainder $\hat{G}$ is
also a gauge transformation that acts in the same way as $G$ on $V(L_1)$. Hence, 
w.l.o.g., we may assume that~$\ord(G)<\ord(L_1)$.

Let $\gcrd(L, M)$ denote the greatest common right divisor of $L, M \in D$.  
Since $G$ is a bijection, any non zero solution $u$ of $L_1$ does not map to zero by $G$. 
Thus, $L_1$ and $G$ have no nontrivial common right hand factor, i.e. $\gcrd(L_1, G)=1$. 
Using the extended Euclidean algorithm $\tilde{G},\tilde{L_1}\in D$ can be determined such
that $\tilde{G}G+\tilde{L_1}L_1=1$. Then $\tilde{G}G$ is the identity on $V(L_1)$ and
$\tilde{G}$ is an inverse of $G$ that sends $V(L_2) \rightarrow V(L_1)$ bijectively.

\begin{definition}\label{def:ge}
  Two operators $L_1$ and $L_2$ with the same order are called gauge equivalent if there
  exists a gauge transformation $G:V(L_1)\rightarrow V(L_2)$ and we use the notation~$L_1
  \sim_g L_2$.
\end{definition}

Suppose $L_1 \sim_g L_2$ where the gauge transformation from $L_1$ to $L_2$ is a single term operator, $c(x)\tau^n$ for $n < \ord(L_1)$. 
Then $\tau^n\cdot L_1 \cdot\tau^{-n}$ is term equivalent to $L_2$ where the term transformation from $\tau^n\cdot L_1 \cdot \tau^{-n}$ to $L_2$ is $\tau-\frac{c(x+1)}{c(x)}$.


\subsubsection{How to compute the gauge transformation}\label{hom}
Suppose $L_1$ and $L_2$ are gauge equivalent and $G$ is a gauge transformation from $L_1$
to $L_2$.  Then there is an operator $H \in D, \ord(H) < \ord(L_2)$ such that $H \cdot
L_1=L_2 \cdot G$.

The algorithm that was used to find the gauge transformation in~\cite{CH09, CHG10, GH10}
works as follows:
\begin{enumerate}
\item For given operators $L_1$ and $L_2$, we set up the ansatz
  $G:=\sum_{i=0}^{\ord(L_1)-1} c_i(x)\tau^i$, where the $c_i(x)$ are undetermined
  coefficients.
\item\label{step3} Right divide $L_2 \cdot G$ by $L_1$ and set the remainder to zero. This way we obtain a
  system $A$ of difference equations for the unknown coefficients $c_i(x)$.
\item  Compute the  rational  solutions of the  system   $A$ to determine  the  values for
  the~$c_i(x)$.
\end{enumerate}

This algorithm was efficient for second order operators, but for operators
  of order three and higher, computing a solution of the system $A$, we get at
  Step~\ref{step3}, is very costly. Hence in the current implementation we use the new
algorithm HOM to compute the gauge transformations that give the set of homomorphisms
$\Hom_D(V(L_1),V(L_2))$ in~$D$ sending $V(L_1)$ to $V(L_2)$ for any~$L_1,L_2\in D$. This
means in particular that we can drop the condition on the orders,~$\ord(L_1) = \ord(L_2)$.

In short, the algorithm HOM works as follows: For $L=\sum_{i=0}^d a_i(x)\tau^i \in D$,
$a_d(x)=1$, we define the $\vee$-adjoint operator $L^\vee:=\sum_{i=0}^d
a_{d-i}(x+i-1)\tau^i$.  Then there is a one to one correspondence between $\Hom(L_1, L_2)$
and rational (invariant under the difference Galois group) elements of $V(L_1^\vee)
\otimes V(L_2)$.  We define a space $\mathcal{M}(L_1^\vee, L_2)$ that is isomorphic to
$V(L_1^\vee) \otimes V(L_2)$.  Then rational elements of $\mathcal{M}(L_1^\vee, L_2)$
correspond bijectively to elements of $\Hom(L_1, L_2)$. Thus, we compute rational elements
of $\mathcal{M}(L_1^\vee, L_2)$.  This is done by working directly with $L_1^\vee$ and
$L_2$, and we avoid computing large operators such as the symmetric product of $L_1^\vee$
and $L_2$ (whose solution space is a homomorphic image of $\mathcal{M}(L_1^\vee, L_2)$.)

Note that if $L_1$ and $L_2$ are of the same order, then HOM returns exactly the gauge
transformations. The algorithm HOM is available at
\url{http://www.risc.jku.at/people/ycha/Hom.txt} and more details can be found in \cite{TensorRatSol}. This is joint work of Yongjae Cha and Mark van Hoeij.

\subsection{GT-equivalence}

\begin{definition}
  Suppose there is a gauge transformation $G$ and a term transformation $T=\tau-r(x)$ such that the
  composition of $G$ and $T$, $G \circ T$, maps $V(L_1)$ to $V(L_2)$, i.e. 
   $G:  V(L_1 \cS (\tau-r(x))  \rightarrow V(L_2)$.  
  Then $L_1$ and $L_2$ are called GT-equivalent, denoted by~$L_1 \sim_{gt}
  L_2$, and the composition of $G$ and $T$ is refered to as the GT-transformation from
  $L_1$ to $L_2$.
\end{definition}

Suppose there is a map $\overline{GT}$ which is a multiple composition of gauge transformations and term transformations. Then 
\cite[Theorem 3.3.]{Le10} shows that we can find a gauge transformation $G$ and a term transformation $T$ such that
$\overline{GT}(V(L_1))=G\circ T( V(L_1))$.

\subsubsection{How to compute the GT-Transformation}

\begin{definition}\label{SNFdef}
  Let  $C$ be a subfield of $\C$ and $r(x) = cp_1(x)^{e_1} \dotsm p_j(x)^{e_j} \in C(x)$, for some $e_i \in \Z$,
  monic irreducible in $p_i(x) \in C[x]$, and let $s_i \in C$ equal
  the sum of the roots of $p_i(x)$.

  $r(x)$ is said to be in {\em shift normal form} if $-\deg(p_i(x)) < {\rm Re}(s_i)
  \leqslant 0$, for $i=1, \dotsc, j$. We denote by $\SNF(r(x))$ the shift normalized form
  of $r(x)$, which is obtained by replacing each $p_i(x)$ by $p_i(x+k_i)$ for some $k_i
  \in \Z$ such that $p_i(x+k_i)$ is in shift normal form.
\end{definition}

$\SNF(r(x))$ is unique up to the choice of $C$. In the algorithm given in
Section~\ref{sec:algo} we assume $C=\Q$.  For $L = a_d(x)\tau^d+ \cdots + a_0(x) \in D $,
we denote by $\det(L)$ the determinant of the companion matrix of $L$, which is
$(-1)^da_0(x)/a_d(x)$.

\begin{theorem}\cite[Theorem 2.3.9]{YC11}\label{tpsuff}
  Suppose $L_1 \sim_{gt} L_2$ for $L_1, L_2 \in C(x)[\tau]$ where $C$ is a subfield of
  $\C$.  Then there exists a gauge transformation $G \in C(x)[\tau]$ from $L_1 \otimes
  (\tau-r(x))$ to $L_2$ for some $r(x) \in C(x)$ where $$r(x)^d =
  \SNF(\det(L_2)/\det(L_1)), \quad \ord(L_1)=d.$$
\end{theorem}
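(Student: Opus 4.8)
\emph{Proof proposal.} The plan is to control how the quantity $\det(\cdot)$ changes under the two building blocks of a GT-transformation and then to read off $r(x)$ from the shift normal form; here I write $\cS$ for the symmetric product (the $\otimes$ of the statement). First I would unwind the hypothesis: by the definition of GT-equivalence there is a term transformation $\tau-\tilde r$ with $\tilde r\in\C(x)$ and a gauge transformation $\tilde G$ such that $\tilde G$ is a bijection $V\big(L_1\cS(\tau-\tilde r)\big)\to V(L_2)$; in particular $\tilde L:=L_1\cS(\tau-\tilde r)\in D$ and $\tilde L\sim_g L_2$. So it is enough to understand $\det$ under $\cS(\tau-\tilde r)$ and under $\sim_g$, and then to normalize $\tilde r$.

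Second, I would establish two determinant identities. (a) From formula~\eqref{sym}, forming $L_1\cS(\tau-\tilde r)$ leaves the leading coefficient unchanged and multiplies the trailing coefficient by $\prod_{j=0}^{d-1}\tau^j(\tilde r)$, hence
\[
\det\big(L_1\cS(\tau-\tilde r)\big)=\Big(\prod_{j=0}^{d-1}\tau^j(\tilde r)\Big)\det(L_1).
\]
(b) If $M\sim_g N$ have the same order $d$ over a field $K$, then $\det(N)/\det(M)=\tau(h)/h$ for some $h\in K^{\times}$. I would prove this with Casoratians: take a gauge transformation $G$ with $\ord(G)<d$ and a basis $u_1,\dots,u_d$ of $V(M)$, so that $G(u_1),\dots,G(u_d)$ is a basis of $V(N)$; expanding each $\tau^k\!\big(G(u_j)\big)$ for $0\le k\le d-1$ in terms of $u_j,\dots,\tau^{d-1}(u_j)$ via the relation $M$ shows that the Casoratian matrix of $(G(u_j))_j$ equals $B$ times the Casoratian matrix of $(u_j)_j$ for a fixed matrix $B$ over $K$ independent of $j$; since both Casoratian matrices are invertible, $B\in\mathrm{GL}_d(K)$, so the Casoratian $W_N$ of $(G(u_j))_j$ equals $\det(B)\,W_M$. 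Since the Casoratian $W_L$ of a basis of $V(L)$ satisfies $\tau(W_L)/W_L=\det(L)$ (this is the determinant of the companion matrix), taking $\tau$-quotients gives $\det(N)/\det(M)=\tau(\det B)/\det B$.

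Third, combining (a), (b) and $\tilde L\sim_g L_2$ yields $\det(L_2)/\det(L_1)=\big(\prod_{j=0}^{d-1}\tau^j(\tilde r)\big)\,h\,\tau(h)^{-1}$ for some rational $h$. Now I would use that $\SNF$ is invariant under $\tau$ and under multiplication by any factor $\tau(g)/g$ with $g$ rational, and that $\SNF(ab)=\SNF\big(\SNF(a)\SNF(b)\big)$. With $q_j:=\tilde r\,\tau(\tilde r)\cdots\tau^{j-1}(\tilde r)$ one has $\tau^j(\tilde r)/\tilde r=\tau(q_j)/q_j$, hence $\prod_{j=0}^{d-1}\tau^j(\tilde r)=\tilde r^{\,d}\,\tau(q)/q$ with $q=\prod_j q_j$, and therefore
\[
\SNF\!\Big(\frac{\det(L_2)}{\det(L_1)}\Big)=\SNF\big(\tilde r^{\,d}\big)=\SNF\big(\SNF(\tilde r)^{\,d}\big)=\SNF(\tilde r)^{\,d},
\]
the last step because a rational function already in shift normal form stays in shift normal form under taking $d$-th powers (the irreducible factors and their normalizing shifts do not change). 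Set $r:=\SNF(\tilde r)$; then $r^d=\SNF\big(\det(L_2)/\det(L_1)\big)$. Finally $r/\tilde r=\tau(g)/g$ for some rational $g$, so a solution of $\tau-\tilde r$ is $g$ times a solution of $\tau-r$; thus $V\big(L_1\cS(\tau-\tilde r)\big)=g\cdot V\big(L_1\cS(\tau-r)\big)$ and multiplication by $g$ is a gauge transformation $L_1\cS(\tau-r)\to\tilde L$, which composed with $\tilde G$ (and reduced modulo $L_1\cS(\tau-r)$) gives a gauge transformation $G$ from $L_1\cS(\tau-r)$ to $L_2$.

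The hard part will be the field of definition, i.e.\ getting $r$ and $G$ over $C(x)$ and not merely over $\C(x)$, since GT-equivalence was assumed only over $\C$ and a $d$-th root can leave $C$. For $G$ this should be routine: $\Hom_D\big(V(L_1\cS(\tau-r)),V(L_2)\big)$ is the set of rational solutions of a linear system with coefficients in $C(x)$, and the bijective homomorphisms form a nonempty Zariski-open subset of a $C$-vector space, hence meet $C(x)$. For $r$ I would argue that the term transformation can be taken with $L_1\cS(\tau-\tilde r)\in C(x)[\tau]$ (the coefficient of $\tau^{d-1}$ in~\eqref{sym} then forces $\tilde r\in C(x)$, whence $r=\SNF(\tilde r)\in C(x)$), the polynomial part of $r$ being in $C(x)$ automatically because it is a $d$-th root of the polynomial part of $\det(L_2)/\det(L_1)\in C(x)$, whose irreducible factors occur with exponents divisible by $d$, and the leading constant being pinned down in $C$ by comparing leading terms at $x=\infty$ in $\det(L_2)/\det(L_1)=\prod_j\tau^j(\tilde r)\cdot h\,\tau(h)^{-1}$. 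Making this normalization rigorous --- equivalently, controlling the $\mu_d$ ambiguity of the $d$-th root by Galois descent from $L_1,L_2\in C(x)[\tau]$ --- is the one genuinely delicate point; everything else is the determinant bookkeeping above.
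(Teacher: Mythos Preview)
The paper does not give its own proof of this theorem; it is quoted from \cite[Theorem~2.3.9]{YC11}, and the only thing added in the present paper is the remark that ``the original statement of the above theorem uses $C=\C$, but the same proof works for any subfield $C$ of $\C$.'' So there is no in-paper argument to compare against.

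That said, your proof over $\C(x)$ is correct and is essentially the natural one: the determinant identity for $\cS(\tau-\tilde r)$ is immediate from~\eqref{sym}; the Casoratian computation yields $\det(N)/\det(M)=\tau(\det B)/\det B$ under gauge equivalence, matching the companion-matrix definition of $\det$; the telescoping $\prod_{j=0}^{d-1}\tau^{j}(\tilde r)=\tilde r^{\,d}\,\tau(q)/q$ together with shift-invariance of $\SNF$ gives $\SNF\bigl(\det(L_2)/\det(L_1)\bigr)=\SNF(\tilde r)^{d}$; and since $r:=\SNF(\tilde r)$ differs from $\tilde r$ by a factor of the form $\tau(g)/g$ with $g$ rational, multiplication by $g$ is a single-term gauge transformation linking $L_1\cS(\tau-r)$ to $L_1\cS(\tau-\tilde r)$.

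On the field of definition you are right to flag it, but you are making it harder than the paper intends. The one-line extension ``the same proof works for any subfield $C$ of~$\C$'' indicates that $\sim_{gt}$ is to be read relative to $C(x)[\tau]$ once $L_1,L_2\in C(x)[\tau]$: the term transformation $\tau-\tilde r$ and the gauge transformation $\tilde G$ are already taken with $\tilde r\in C(x)$ and $\tilde G\in C(x)[\tau]$. Under that reading $r=\SNF(\tilde r)\in C(x)$ (with $\SNF$ computed over $C$ as in Definition~\ref{SNFdef}), the matrix $B$ and the auxiliary $g$ live over $C(x)$, and your linear-system argument for $G\in C(x)[\tau]$ is unnecessary. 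If you instead insist on reading $\sim_{gt}$ over $\C(x)$, then your worry is real: the leading constant of $\SNF\bigl(\det(L_2)/\det(L_1)\bigr)$ lies in $C$, but a $d$-th root of it need not lie in $C$, and the paper supplies no descent argument for that case.
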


The original statement of the above theorem uses $C=\C$, but the same proof works for any
subfield $C$ of $\C$.

Suppose we know that $L_1 \sim_{gt} L_2$ for $L_1, L_2 \in \Q(x)[\tau]$ and we want to
find the GT-transformation.  By the above theorem there exists $r(x) \in \Q(x)$ such that
$\SNF(\det(L_2)/\det(L_1))=r(x)^d$ where $d=\ord(L_1)$.  When $d$ is even $L_1 \cS
(\tau-r(x))$ or $L_1 \cS (\tau+r(x))$ can be gauge equivalent to $L_2$.  Thus the
algorithm Hom will return a non-empty set for either of the two. Furthermore, when $d$ is
odd, $L_1 \cS (\tau-r(x))$ is gauge equivalent to $L_2$.

\subsection{Symmetric powers of operators}
Given an operator $L\in D$ that annihilates a function $u$, then in order to obtain an
operator $M\in D$ that annihilates~$u^2$ we need the symmetric square of~$L$. In this
section we state some facts about these operators.

By $L^{\circledS m}$ we denote the $m^\text{th}$ symmetric power of $L$, i.e.,we
define~$L^{\circledS 1}=L$ and $L^{\circledS m} = L{\circledS}L^{\circledS(m-1)}$.  $K$ is
called a symmetric square root of $L$ if $L=K^{\circledS 2}$.

Suppose $K_d$ is a differential operator of order 2 then it is known that the order of
$L_d^{\circledS m}$ is $m+1$~\cite[Lemma 3.2, (b)]{MS85}.  However the following lemma
shows that this is not true for difference operators.

\begin{lemma}\cite[Lemma 3]{GH10}\label{lemma:sym-power2}
 Let $K= a_2(x)\tau^2+a_1(x)\tau+a_0(x) \in D$. Then
\begin{enumerate}
 \item 
	if $a_1(x) \neq 0$ then 
	\[
        K^{\circledS 2}=b_3(x)\tau^3+b_2(x)\tau^2+b_1(x)\tau+b_0(x),
        \] 
        where
        \[
        \begin{aligned}
	\null\kern-1.5em b_3(x)&=a_1(x)a_2(x+1)^2a_2(x)\\
	\null\kern-1.5em b_2(x)&=a_1(x+1)a_2(x)(a_0(x+1)a_2(x)-a_1(x+1)a_1(x))\\
	\null\kern-1.5em b_1(x)&=a_0(x+1)a_1(x)(a_1(x+1)a_1(x)-a_0(x+1)a_2(x))\\
	\null\kern-1.5em b_0(x)&=-a_1(x+1)a_0(x+1)a_0(x)^2.
        \end{aligned}
        \]
\item if $a_1(x) =0$ then $K^{\circledS 2}=a_2(x)^2\tau^2-a_0(x)^2$.
\end{enumerate}
The formulas above give order-minimal operators for both cases.
\end{lemma}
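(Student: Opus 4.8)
The plan is to read the two claimed operators off the recursion satisfied by every solution of $K$, and then to verify that no operator of smaller order annihilates all products of solutions. Throughout, I would fix a $\C$-basis $u_1,u_2$ of $V(K)$ (which is two-dimensional by the quoted dimension theorem) and set $\alpha(x)=-a_1(x)/a_2(x)$, $\beta(x)=-a_0(x)/a_2(x)$, so that every $u\in V(K)$ satisfies $u(x+2)=\alpha(x)u(x+1)+\beta(x)u(x)$ for all large $x$. Since $\ord(K)=2$ we have $a_2(x)\neq0$, and since $K$ is normal $a_0(x)\neq0$; hence $\beta(x)\neq0$ always, while $\alpha(x)\neq0$ precisely in case~(1).

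First I would collect the products of solutions into a $\tau$-stable $\C(x)$-module of dimension at most $3$. For $u,v\in V(K)$ put $f=uv$, $g=(\tau u)v+u(\tau v)$, $h=(\tau u)(\tau v)$ in $\Seq$; a direct computation with the recursion gives $\tau f=h$, $\tau g=\beta g+2\alpha h$, and $\tau h=\tau^2 f=\beta^2 f+\alpha\beta g+\alpha^2 h$, where all coefficients involve only $\alpha$ and $\beta$. Consequently $f,\tau f,\tau^2 f,\tau^3 f$ all lie in $\C(x)f+\C(x)g+\C(x)h$ and satisfy a $\C(x)$-linear relation whose coefficients can be chosen independently of $u,v$; the associated operator $P\in D$ then annihilates $uv$ for all $u,v\in V(K)$, hence every product of solutions of $K$. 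In case~(1), where $\alpha\neq0$, the vectors $f$, $\tau f=h$, $\tau^2 f=\beta^2 f+\alpha\beta g+\alpha^2 h$ are $\C(x)$-independent (the $3\times3$ determinant of their coordinates in $(f,g,h)$ equals $-\alpha\beta\neq0$), so $\tau^3 f$ is a unique $\C(x)$-combination of them and $P$ has order exactly $3$; I would carry out this elimination, clear the common denominator, and substitute $\alpha=-a_1/a_2$, $\beta=-a_0/a_2$, which I expect to reproduce exactly the displayed $b_0,\dots,b_3$ (the normalisation $b_3=a_1 a_2(x+1)^2 a_2(x)$ being forced by the denominator, and normality being visible from $b_0=-a_1(x+1)a_0(x+1)a_0(x)^2\neq0$). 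In case~(2) one has $\alpha=0$, so $\tau^2 f=\beta^2 f$ already, i.e.\ $P=\tau^2-\beta^2$, which after clearing denominators is $a_2(x)^2\tau^2-a_0(x)^2$.

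Finally I would establish order-minimality, and this is where the hypothesis $a_1\neq0$ genuinely enters. Suppose $Q=c_2\tau^2+c_1\tau+c_0\in D$ annihilates all products of solutions (case~(1)); taking $v=u$ and substituting the recursion turns $Q(u^2)=0$ into the vanishing of the quadratic form $(c_2\alpha^2+c_1)Y^2+2c_2\alpha\beta\,XY+(c_2\beta^2+c_0)X^2$ at $(X,Y)=(u(x),u(x+1))$, for every $u\in V(K)$ and all large $x$. For all large $N$, the evaluation $u\mapsto(u(N),u(N+1))$ is a $\C$-linear isomorphism $V(K)\to\C^2$ (injective because a solution vanishing at two consecutive points past which the recursion holds is eventually $0$, and both spaces are $2$-dimensional), so the form vanishes on all of $\C^2$; hence its three coefficients vanish, in particular $c_2\alpha\beta=0$, and since $\alpha\beta=a_0a_1/a_2^2\neq0$ this gives $c_2=0$ and then $c_1=c_0=0$. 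Thus no nonzero operator of order $<3$ annihilates all products, so the case~(1) operator is order-minimal. The same argument applied in case~(2) to $Q=c_1\tau+c_0$ yields $c_1Y^2+c_0X^2$ vanishing on $\C^2$, so $Q=0$, while $a_2(x)^2\tau^2-a_0(x)^2$ is normal of order exactly $2$; hence it too is order-minimal. The main obstacle I anticipate is not conceptual but clerical: keeping the ``for all large $x$'' bookkeeping honest when passing between identities in $\Seq$ and pointwise identities, and pushing the case~(1) elimination all the way to the stated closed forms for the $b_i$.
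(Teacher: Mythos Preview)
The paper does not supply its own proof of this lemma; it simply quotes the statement from \cite[Lemma 3]{GH10}. So there is nothing to compare your argument against in this text.

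That said, your proposal is correct and gives a clean, self-contained proof. The module-theoretic setup with $f,g,h$ correctly shows that the span is $\tau$-stable; your determinant $-\alpha\beta$ is right and exactly pinpoints where the case split $a_1\neq0$ versus $a_1=0$ matters for the order. The order-minimality argument is the nicest part: restricting to squares $u^2$ and using that $u\mapsto(u(N),u(N+1))$ is a $\C$-isomorphism $V(K)\to\C^2$ for large $N$ forces the quadratic form to vanish identically, hence $c_2\alpha\beta=0\Rightarrow c_2=0\Rightarrow c_1=c_0=0$. Since any $Q$ annihilating all products $uv$ in particular annihilates all squares, this suffices. The case~(2) minimality argument works the same way. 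One minor remark: Definition~\ref{def:sp} asks $K^{\circledS2}$ to be monic, whereas the displayed $b_i$ are not; the lemma is really giving a convenient polynomial-coefficient representative, and your ``clear the common denominator'' step lands exactly there. Your anticipated obstacle is indeed only clerical: the explicit elimination in case~(1) is a short but tedious symbolic computation, and the ``for all large $x$'' bookkeeping is handled by the usual move that a rational function vanishing at all large integers is zero.
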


If a full operator $K=a_2(x)\tau^2+a_1(x)\tau+a_0(x)$ is
a symmetric square root of a third order operator $L$, then also $\overline{K} =
a_2(x)\tau^2-a_1(x)\tau+a_0(x)$ is a symmetric square root of~$L$. If $u$ is a solution
of $K$, then $(-1)^x u$ is a solution of~$\overline{K}$. We say $K$ and $\overline{K}$ are conjugates if $K \sim_t \overline{K}$ where the term transformation is $\tau+1$.

Solutions of an equation of type 2 are called Liouvillian solutions~\cite{CH09,HS99,GH10}.
Suppose $u_1$ is a solution of $K=a_2(x)\tau^2-a_0(x)$ then $\{ u_1, u_2 \}$, where $u_2=
(-1)^xu_1$, forms a basis of $V(K)$ and~$u_1^2=u_2^2$.  Also, it is easy to verify that
for arbitrary orders~$m$ it holds that $K^{\circledS m}=a_2(x)^m\tau^2+(-1)^{m+1}a_0(x)^m$
with a similar proof to the one of Lemma~\ref{lemma:sym-power2}.

\begin{definition}
  A second order operator $K$ is called a unity free operator if the solution space of
  $K$ does not admit a basis $\{ v_1, v_2 \}$ such that $v_1^n=v_2^n$ for some $n \in \N$.
\end{definition}

Let $K=\left( {x}^{2}+x \right) {\tau}^{2}+ \left( 2\,x+{x}^{2} \right)
\tau+{x}^{2}+3\,x+2$.  Then a basis of the solution space of $K$ is $\{xw_1^x, xw_2^x \}$
where $w_1$ and $w_2$ are solutions of $z^2+z+1$ in $\C$.  Since $ (xw_1^x)^3=
(xw_2^x)^3=x^3$ for $x \in \N$, $K$ is not a unity free operator.

\begin{lemma}
\label{lm:unfr}
If $K \in D$ is an irreducible second order operator then it is a unity free operator.
\end{lemma}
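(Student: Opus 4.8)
The plan is to argue by contraposition: assume $K$ is a second order operator that is *not* unity free, and show that $K$ must be reducible in $D$. So suppose $V(K)$ has a basis $\{v_1,v_2\}$ with $v_1^n = v_2^n$ for some $n \in \N$. Then $(v_1/v_2)^n = 1$ as an element of $\Seq$ (the quotient is a unit since $v_2$, being part of a basis of a normal operator's solution space, has only finitely many zeros), so $h := v_1/v_2$ satisfies $h^n = 1$ in $\Seq$. The first step is to extract from this that $h$ is hypergeometric with a constant-modulus, in fact a root-of-unity-valued, certificate: from $h^n=1$ we get $\tau(h)^n = h^n$, hence $(\tau(h)/h)^n = 1$, so $r := \tau(h)/h$ is an $n$-th root of unity in $\Seq$. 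Eventually it has only finitely many distinct values among the $n$-th roots of unity, so after passing to a cofinal set $r$ is a *constant* $\zeta$ with $\zeta^n = 1$; in particular $r$ is rational, so $h$ is hypergeometric and $(\tau - \zeta)(h) = 0$.

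The second step is to convert this into a first-order right factor of $K$. Since $v_1 = h v_2$ and $h$ is annihilated by $\tau - \zeta$, we have $v_1 \in V(K \cS (\tau-\zeta))$ up to the usual "for large enough index" caveat; but more directly, the existence of a nonzero $v_2 \in V(K)$ together with the hypergeometric relation $\tau(v_1) = \zeta\, \tau(v_2) \cdot (\tau(h)/h)$... — cleaner: consider the term transformation. We have $V(K) \ni v_2$ and $V(K \cS (\tau - \zeta^{-1}))$ contains $v_1 \cdot g$ where $(\tau - \zeta^{-1})(g) = 0$, and choosing $g = h^{-1}$ (a solution of $\tau - \zeta^{-1}$ since $\tau(h^{-1})/h^{-1} = \zeta^{-1}$) gives $v_1 h^{-1} = v_2 \in V(K \cS (\tau-\zeta^{-1}))$. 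Thus $v_2$ is a common solution of $K$ and $K \cS (\tau - \zeta^{-1})$. If these two operators are not scalar multiples of one another, their GCRD is a nonzero proper right factor of $K$, i.e. a first order right factor, contradicting irreducibility of $K$. If they *are* proportional, then comparing leading and trailing coefficients in formula~\eqref{sym} forces $\zeta^{-1}$ (equivalently its relevant $\tau$-shifts, but $\zeta$ is constant) to be a root of unity compatible with $\det(K)$; one then shows $K$ itself has the shape that makes it reducible, or re-runs the argument with $v_1$ and $v_2$ interchanged, or notes that $h$ being a nonconstant unit forces a genuine factorization anyway.

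The third step handles the degenerate possibility $\zeta = 1$, i.e. $h$ is (eventually) a nonzero constant $c$. Then $v_1 = c\, v_2$ in $\Seq$, contradicting that $\{v_1,v_2\}$ is a $\C$-basis of the two-dimensional space $V(K)$. So $\zeta \ne 1$, which is what we used above to get a genuine first-order factor. Assembling: in every case a nonzero first order right factor of $K$ (or of a term-equivalent operator, which transfers back to a factor of $K$ since term equivalence via $\tau - \zeta^{-1}$ is invertible) is produced, so $K$ is reducible. Contrapositively, an irreducible second order $K$ is unity free.

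I expect the main obstacle to be the bookkeeping in the second step: making precise the passage from "$v_2$ is a common solution of two operators in $\Seq[\tau]$, one of which lies in $D$" to "$K$ has a first order right factor in $D$", since a priori $K \cS (\tau-\zeta^{-1})$ and its GCRD with $K$ live in $\Seq[\tau]$ and one must check rationality (which is where $\zeta$ being a constant root of unity, hence in $\C \subseteq \C(x)$, is essential) and that the GCRD is genuinely proper — ruling out the proportional case cleanly is the delicate point, and it is essentially the observation that a nonconstant unit $h \in \Seq$ relating two basis elements cannot be consistent with $K$ being normal and irreducible.
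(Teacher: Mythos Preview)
Your overall strategy---contraposition, set $h=v_1/v_2$, show $h$ is hypergeometric with constant certificate, then extract a first order right factor of $K$---is the same as the paper's. The paper phrases the key step as ``we may assume $v_1=(u_{n_0}^a)^x f$, $v_2=(u_{n_0}^b)^x f$'', which is exactly your claim that $r=\tau(h)/h$ is a constant root of unity, and then reads off the factor $\tau-u_{n_0}^a r$ of $K$ directly rather than via a GCRD argument.

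There are, however, two genuine gaps in your plan. First, the passage from ``$r$ takes only finitely many values (all $n$-th roots of unity)'' to ``$r$ is eventually a constant $\zeta$'' is not valid as written: a sequence with finitely many values need not be eventually constant, and ``passing to a cofinal set'' does not give $(\tau-\zeta)(h)=0$ in $\Seq$. One has to use the recurrence for $v_1$ and $v_2$ to force this. (The paper's ``may assume'' hides the same issue.)

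Second, and more seriously, your treatment of the proportional case does not go through, and cannot. If $K\cS(\tau-\zeta^{-1})$ is proportional to $K$, comparing coefficients via \eqref{sym} gives $a_1\zeta^{-1}=a_1$ and $a_0\zeta^{-2}=a_0$; with $\zeta\neq1$ this forces $a_1=0$ and $\zeta=-1$, i.e.\ $K$ is \emph{not full}. None of your three suggested fixes works here: swapping $v_1,v_2$ gives the same $\zeta=-1$; and $K$ need not be reducible. Concretely, $K=\tau^2-x\in D$ is irreducible (any right factor $\tau-s$ would require $s(x)s(x+1)=x$, impossible for rational $s$ by a degree parity count), yet the basis $w_1,w_2$ obtained from the ``even-supported'' and ``odd-supported'' solutions by $w_1=v_1+v_2$, $w_2=v_1-v_2$ satisfies $w_1^2=w_2^2$. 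So the lemma as stated is false without the additional hypothesis that $K$ be \emph{full}; this hypothesis is present everywhere the lemma is applied in the paper (Theorem~\ref{thm:nonvan}), and with it the proportional case collapses to $\zeta=1$, already handled in your third step.
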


\begin{proof}

  We prove this by contraposition.  Suppose $K \in D$ is not a unity free operator.
  Then we may assume $K$ is monic and $V(K)$ admits a basis $\{ v_1, v_2 \}$ such that
  $v_1^n=v_2^n$ for some $n \in \Z$.  Let $n_0 \in \Z_{>0}$ be the smallest integer that
  satisfies $v_1^{n_0}=v_2^{n_0}$, then we may assume $v_1=(u_{n_0}^a)^xf,
  v_2=(u_{n_0}^b)^xf$ for some $f \in \Seq$ where $u_{n_0}$ denotes $n_0$th root of unity
  and $n_0, a, b$ are pairwise relatively prime.  Thus
  $K=(\tau^2-(u_{n_0}^a+u_{n_0}^b)\tau+u_{n_0}^au_{n_0}^b) \cS (\tau-r)$ where
  $r=\tau(f)/f$. Since $K$ is an element in $D$ and by equation~\eqref{sym}, $r$ is in
  $\C(x)$ and this implies $K$ is reducible in D.
\end{proof}

\begin{lemma}
\label{lm:zd}
If $v \in \Seq, v \neq 0$ satisfies a full second order operator $K=b_2(x)\tau^2+b_1(x)\tau+b_0(x) \in D$ then 
$v$ is not a zero divisor in $\Seq$.
\end{lemma}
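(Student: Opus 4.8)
The plan is to prove the equivalent statement that $v$ has only finitely many zero entries. Indeed, an element of $\Seq$ is a unit exactly when it has finitely many zeros, and a non-unit of $\Seq$ is either $0$ or a zero divisor: if $v\neq 0$ but $v(i)=0$ for infinitely many $i$, then $vw=0$, where $w$ is the class of the indicator sequence of $\{i:v(i)=0\}$ and $w\neq 0$ in $\Seq$. So it suffices to rule out, by contradiction, that $v$ has infinitely many zeros.

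First I would fix $n_0\in\N$ large enough that the recurrence $b_2(n)v(n+2)+b_1(n)v(n+1)+b_0(n)v(n)=0$ holds for all $n\ge n_0$ and that $b_2(n),b_1(n),b_0(n)$ are defined and nonzero there; this is possible precisely because $K$ is full, each $b_i$ being a nonzero element of $\C(x)$ and hence having only finitely many zeros and poles in $\N$. Two elementary consequences follow. (i) $v$ has no two consecutive zeros beyond $n_0$: from $v(n)=v(n+1)=0$ and $b_2(n)\neq 0$ one gets $v(n+2)=0$, and running the recurrence forward and backward (using $b_2,b_0\neq 0$) yields $v(m)=0$ for all $m\ge n_0$, i.e.\ $v=0$. (ii) If $v(n)=0$, then $v(n+1)\neq 0$ by (i), and the recurrence at $n$ reads $b_2(n)v(n+2)=-b_1(n)v(n+1)\neq 0$ --- here $b_1(n)\neq 0$, i.e.\ fullness, is used a second time --- so $v(n+2)\neq 0$, and consecutive zeros of $v$ are at distance at least $3$. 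I should stress that this structural information is not yet enough, since a partner $w$ with $vw=0$ could be supported precisely on the isolated zeros of $v$.

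The heart of the argument, and the step I expect to be hardest, is to exclude infinitely many isolated zeros. Suppose they occur at $n_1<n_2<\cdots$ with gaps $g_k=n_{k+1}-n_k\ge 3$. Writing the recurrence in companion form, $(v(n+1),v(n+2))=M_n\,(v(n),v(n+1))$ with $M_n$ invertible since $\det M_n=b_0(n)/b_2(n)\neq 0$, the condition $v(n_k)=0$ says the state vector is proportional to $(0,1)$, so the block transfer matrix $M_{n_{k+1}-1}\cdots M_{n_k}$ sends the line through $(0,1)$ to itself, hence has vanishing $(1,2)$-entry. For a fixed block length $g$, this $(1,2)$-entry is a single rational function of the starting index. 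If one can show that the gap sequence $(g_k)$ is bounded, then some $g$ recurs for infinitely many $k$, that rational function has infinitely many integer zeros and hence vanishes identically, and one obtains $v(x+g)=\rho(x)v(x)$ for a nonzero $\rho\in\C(x)$, i.e.\ $(\tau^g-\rho)(v)=0$. Since $v$ has an isolated zero it is not hypergeometric, so $K$ is its minimal operator and right-divides $\tau^g-\rho$; classifying the solution space of $\tau^g-\rho$ --- it is spanned by the sequences $\zeta^x h$ over the $g$-th roots of unity $\zeta$, for a fixed solution $h$ --- then forces $V(K)$ to possess a basis $\{v_1,v_2\}$ with $v_1^m=v_2^m$ for some $m$, so $K$ is not unity free, contradicting Lemma~\ref{lm:unfr} (which uses that $K$ is irreducible, not merely full). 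What remains is the boundedness of the gaps, which I would extract from the asymptotic theory of the second-order operator $K$: its two fundamental solutions must grow at comparable rates --- otherwise $v$ is asymptotic to a nonzero multiple of the dominant one and has no zeros for large $n$ --- and the per-step phase increment of their oscillating quotient stays bounded away from $0$ unless the two roots of the indicial polynomial asymptotically coalesce, a degenerate situation to be dispatched by a separate direct argument. This gap control is the principal obstacle.
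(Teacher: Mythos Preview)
Your instinct that steps (i) and (ii) alone do not finish the argument is exactly right---and in fact the paper's own proof stops precisely there. After ruling out two consecutive zeros, the paper simply asserts that if $v(n_2)=0$ and $v(n_2+1)\neq 0$ for some $n_2\ge n_0$, then ``by~\eqref{eq:zd}, $v(x)\neq 0$ for all $x\ge n_2+1$.'' As you observe, the recurrence at $n_2$ only yields $v(n_2+2)\neq 0$; nothing prevents a later isolated zero. Indeed the lemma as stated is \emph{false}: take $K=\tau^2-\tau+1\in D$, which is full, and $v(x)=\sin(\pi x/3)$, a nonzero solution with $v(3k)=0$ for every $k\in\N$, hence a zero divisor in $\Seq$. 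So the paper's short argument has a genuine gap, and the hypothesis should be strengthened to ``$K$ irreducible and full''---which is harmless for the paper, since the only place the lemma is invoked is Theorem~\ref{thm:nonvan}, where $K$ is irreducible.

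Your proposed route is aimed at this corrected statement (you explicitly invoke Lemma~\ref{lm:unfr}, which needs irreducibility), but two of its steps are not yet proofs. The ``bounded gaps'' claim via asymptotics is, as you say, only a sketch; making it rigorous would require a careful case analysis of the formal solutions of $K$ at infinity, and the ``degenerate'' equal-growth case is not a corner case one can wave away. More seriously, the inference from $K\mid(\tau^g-\rho)$ to ``$K$ is not unity free'' is unjustified: you only obtain that $V(K)$ sits inside $\mathrm{span}\{\zeta^{jx}h:0\le j<g\}$, equivalently that $v_1/v_2$ is $g$-periodic for any basis $\{v_1,v_2\}$ of $V(K)$. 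That is strictly weaker than the existence of a basis with $v_1^m=v_2^m$, which would require the $g$ values of the periodic ratio all to be $m$-th roots of a common constant. Some extra input---for instance that $V(K)$ is cut out by an operator with coefficients in $\C(x)$, or a Galois-theoretic argument---is needed to force $V(K)$ to be spanned by two sequences of the form $\zeta_1^x h$ and $\zeta_2^x h$, and you have not supplied it.
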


\begin{proof}
We will prove that $v$ has only finitely many zeros.
Since $K(v)=0$ there is $n_0 \in \N$ such that 
\begin{equation}\label{eq:zd}
b_2(x)v(x+2)+b_1(x)v(x+1)+b_0(x)v(x)=0
\end{equation}
and $b_i(x)$ has no poles or roots for all $x \geq n_0,i=1,2$. Suppose $v(n_1)=v(n_1+1)=0$
for some $n_1 \geq n_0$. Then by~\eqref{eq:zd}, $v(x)=0$ for all $x \geq n_1$ and this
contradicts that $v \neq 0$. Suppose $v(n_2)=0, v(n_2+1) \neq 0$ for some $n_2 \geq n_0$.
Then again by~\eqref{eq:zd}, $v(x) \neq 0$ for all $x \geq n_2+1$. Thus $v(x) \neq 0$ for
$x$ large enough and hence $v$ is a unit.
\end{proof}

\begin{theorem}\label{thm:nonvan}

  If $L=K^{\cS m}$ for some irreducible full second order operator $K \in D$ then $\ord(L)=m+1$

\end{theorem}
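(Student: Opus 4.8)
The plan is to show $\ord(L)=m+1$ by two matching inequalities; the easy direction is $\le$ and all the content is in $\ge$. For the upper bound I would fix a basis $\{u_1,u_2\}$ of $V(K)$, each of which is a unit of $\Seq$ by Lemma~\ref{lm:zd}; since $V(L)=V(K^{\cS m})$ is, by the construction of the symmetric product, the $\C$-span of the $m$-fold products of elements of $V(K)$, multilinearity puts it inside $W:=\mathrm{span}_{\C}\{u_1^{j}u_2^{m-j}:0\le j\le m\}$. As $K$ is full, $L$ is normal (by~\eqref{sym} and Lemma~\ref{lemma:sym-power2}), so $\ord(L)=\dim_{\C}V(L)\le\dim_{\C}W\le m+1$. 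It remains to prove that the $m+1$ products $u_1^{j}u_2^{m-j}$ are $\C$-linearly independent, equivalently $\dim_{\C}W=m+1$.

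Assume they are dependent. Dividing a nontrivial relation by the unit $u_2^{m}$, the element $w:=u_1/u_2\in\Seq$ satisfies a nonzero polynomial over $\C$ of degree $\le m$; take one, $P$, of minimal degree $d$. Then $d\ge2$ (as $w\notin\C$ by independence of $u_1,u_2$), and $P$ is squarefree since $\Seq$ has no nonzero nilpotents, so $w(n)$ takes values in a fixed $d$-element set $\{\alpha_1,\dots,\alpha_d\}\subseteq\C$ for all large $n$; in particular $w$, and each $\tau^{j}(w)$, is \emph{finite-valued}. The plan from here is to turn this into a reducibility statement about $K$. First I would use the Casoratian $C:=u_1\tau(u_2)-\tau(u_1)u_2$: it satisfies $\tau(C)/C=\det(K)$, a nonzero element of $\C(x)$ since $K$ is full, so $C$ is a unit of $\Seq$; and from $C=u_2\tau(u_2)\,(w-\tau(w))$ we get that $w-\tau(w)$ is a unit, hence $w(n)\ne w(n+1)$ for large $n$. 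Then, writing $K=a_2\tau^2+a_1\tau+a_0$ and $t_n:=u_2(n+1)/u_2(n)$, the two independent vectors $(u_i(n),u_i(n+1),u_i(n+2))$, $i=1,2$, are orthogonal to $(a_0(n),a_1(n),a_2(n))$, which has no zero coordinate; reading off their cross product and using $u_1=wu_2$ gives $w(n)\ne w(n+2)$ together with the identity
\[
t_{n+1}=\frac{a_1(n)}{a_2(n)}\cdot\frac{w(n)-w(n+1)}{w(n+2)-w(n)}\qquad(n\text{ large}).
\]
Its last factor is finite-valued, so $t_n$ is a rational function of $n$ times a finite-valued unit of $\Seq$.

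Next I would strip off the rational part of $t_n$. Choose a hypergeometric $H\in\Seq$ whose certificate $\tau(H)/H$ is exactly that rational factor, so that $E:=u_2/H$ has $E(n+1)/E(n)$ finite-valued, and set $K_0:=K\cS(\tau-H(x)/H(x+1))$. By~\eqref{sym}, $K_0\in D$ is again a full second order operator, it is term equivalent to $K$ and hence reducible if and only if $K$ is, and $V(K_0)=\C\,E+\C\,wE$. Now rerun the Casoratian and cross-product computations for $K_0$ with this basis: writing $K_0=b_2\tau^2+b_1\tau+b_0$, the ratios $b_0/b_2$ and $b_1/b_2$ come out as finite-valued sequences, and since they are also rational functions of $n$ they must be constants. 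Thus $K_0$ has constant coefficients, hence, $\C$ being algebraically closed, factors into two first order operators in $D$; so $K_0$, and therefore $K$, is reducible, contradicting the hypothesis. This forces the products $u_1^{j}u_2^{m-j}$ to be independent and gives $\ord(L)=m+1$.

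I expect the hard part to be the passage ``$w$ finite-valued $\Rightarrow$ $K$ reducible'' in the last two paragraphs: the bookkeeping of which sequences remain finite-valued and which remain units under the term transformation must be done carefully, and irreducibility is genuinely needed — for the reducible full operator $\tau^2+\tau+1$ the solution ratio $\zeta^{-x}$, with $\zeta$ a primitive cube root of unity, is finite-valued, and correspondingly $\ord\bigl((\tau^2+\tau+1)^{\cS m}\bigr)=3$ for every $m\ge2$. An alternative packaging of this step would go through the notion of a unity free operator: show that a finite-valued $w$ makes $V(K)$ admit a basis $\{v_1,v_2\}$ with $v_1^{N}=v_2^{N}$ for some $N$, so that $K$ is not unity free, contradicting Lemma~\ref{lm:unfr}.
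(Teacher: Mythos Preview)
Your argument is correct, and its opening coincides with the paper's: assume the $m+1$ products $u_1^{j}u_2^{m-j}$ are dependent, divide through by the unit $u_2^{m}$ (Lemma~\ref{lm:zd}), deduce that $w=u_1/u_2$ satisfies a nonzero polynomial over $\C$ and is therefore finite-valued, and dismiss the constant case by independence. Where you diverge is in the non-constant case. The paper disposes of it in a single sentence (``since $K$ is an irreducible full operator in $D$, it contradicts that $v_1$ is a solution of $K$''), which it does not justify; the obvious two-line completion is: if $w$ is finite-valued and not eventually constant, pick a value $\alpha$ attained infinitely often, so that $u_1-\alpha u_2\in V(K)\setminus\{0\}$ has infinitely many zeros, contradicting Lemma~\ref{lm:zd}.

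Your route is substantially different and more elaborate. You use the Casoratian to get $w(n)\ne w(n+1)$, read off the three-term recurrence to write $\tau(u_2)/u_2$ as a rational function times a finite-valued unit, strip the rational part by a term transformation, and conclude that $K$ is term-equivalent to a constant-coefficient operator, hence reducible over $\C(x)[\tau]$. What this buys you is transparency about \emph{where} irreducibility is actually consumed: in your argument it enters only at the very last line, whereas the short completion above hides it inside Lemma~\ref{lm:zd}---and, as your own example $\tau^{2}+\tau+1$ witnesses, that lemma is in fact false for full but reducible operators (take $v=\zeta^{n}-\zeta^{2n}$), so the paper's chain of lemmas is more delicate than it appears. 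The trade-off is length: your proof of the key implication is an order of magnitude longer. Your proposed ``alternative packaging'' via Lemma~\ref{lm:unfr} is closer to what the paper cites, but as you already suspect it does not go through directly, since a finite-valued ratio need not take root-of-unity values.
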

\begin{proof} 

   Let $\{ v_1, v_2 \}$ be a basis of $V(K)$. We will show that then $\{v_1^iv_2^{m-i} \mid
  i=0..m \}$ are linearly independent.  Suppose there exist $c_i$ in $\C$, not all zero
  such that $c_m v_1^m+c_{m-1} v_1^{m-1}v_2+\cdots+c_0 v_2^m=0$.  
  
  By Lemma~\ref{lm:zd}, $v_2$ is not a unit and since $K$ is irreducible operator, by
  Lemma~\ref{lm:unfr}, $v_1^{n}/v_2^{n} \neq 1$ for any $n \in \N$.  Let $z:=v_1/v_2 \in
  \Seq$ and $f(y):=c_m y^m+c_{m-1}y^{m-1}+\cdots+c_0$ then $f(z)=0$, i.e. $f(z(x))=0$ for
  all $x \in \N$.  Thus, $z(x) \in \{ c \in \C \mid f(c)=0 \}$ for all $x \in \N$ and
  $v_1=zv_2$. Suppose $z$ is not a constant sequence. Since $K$ is an irreducible full operator
  in $D$, it contradicts that $v_1$ is a solution of $K$.  Suppose $z$ is a constant
  sequence. Then it contradicts that $v_1$ are $v_2$ linearly independent.
\end{proof}

In the differential case it holds that if the symmetric square of a third order
differential operator $L_d \in \C(x)[\partial]$ has order 5, then $L_d=K_d^{\cS 2}$ for
some second order operator $K_d \in \C(x)[\partial]$.  However, the following example
shows that this does not hold in the difference case.

\begin{example}
\label{example:symp}
Let $E:=( x+1) {\tau}^{3}+ ( -28{x}^{3}-4{x}^{4}-36-84 x-73{x}^{2}) {\tau}^{2}+ (
-69x-77{x}^{3}-18-104{x }^{2}-4{x}^{5}-28{x}^{4} ) \tau+{x}^{4}+5{x}^{3}+8{x}^{2 }+4x \in
D$. Then $\ord(E^{\cS 2})=5$. A solution of $E$ is $x I_x(1)^2$ where $I_x(z)$ denotes the
modified Bessel function of the first kind.  Then the symmetric square roots of $E$ are
$K_{1}=\tau^2+(2+2x)\sqrt{x+1}\tau-\sqrt{x(x+1)}$ and
$K_{2}=\tau^2-(2+2x)\sqrt{x+1}\tau-\sqrt{x(x+1)}$, which are not in~$D$.  A solution of
$K_{1}$ and $K_{2}$ are $\sqrt{x}I_x(1)$ and $-\sqrt{x}I_x(1)$, respectively.

Let $B:=z\tau^2-(2x+2)\tau+x+z$. Then a solution of $B$ is $I_x(z)$.  Then $E=K_{1}^{\cS2}
\sim_t B^{\cS2}$, but $B$ and $K_{1}$ are not gauge equivalent in~$D$, i.e, there is no
operator in $D$ that sends $V(B)$ to $V(K_{1})$.
Since $\sqrt{x}$ is not a solution of 
any shift operator in $D$, \cite[Theorem 5.2]{FlajoletGerholdSalvy05}and \cite[Lemma A.2]{Chen2012111}, $K_{1}$ is not a symmetric product of $B$ and a difference
operator in~$D$.

\end{example}

In the differential case, suppose $L_d \sim_t K_d^{\cS 2}$, i.e, the
  solution of $L_d$ can be obtained by multiplying a hyperexponential term $h$ to the
  solutions of $K_d^{\cS 2}$. Let $\{ u_1, u_2\}$ be a basis of the solution space of
  $K_d$, then $L_d$ admits a basis of the solution space $\{ gu_1^2, gu_1u_2, gu_2^2\}$.
  However, if $g$ is hyperexponential then $\sqrt{g}$ is also hyperexponential. Thus,
  $L_d=\tilde{K}^{\cS 2}$ for $\tilde{K} \in \C(x)[\delta]$ such that $\tilde{K}=K \cS
  (\partial-\frac12 \frac{g'}{g})$ where $g'=\frac{d}{dx} h$.  However if $h$ is a
  hypergeometric term, $\sqrt{h}$ is not guaranteed to be a solution of an operator in $D$.


\begin{definition}\label{def:gauge-equ}
  An irreducible operator $L$ is said to be {\emph solvable in terms of second order in $D$} if it is
  GT-equivalent to $K_1 \circledS K_2 \cdots \circledS K_d$ where the $K_i$'s
  are irreducible and full second order operators in $D$.
\end{definition}

We need the following Lemma to prove Theorem~\ref{thm:ord3dec}.

\begin{lemma}\label{symbs}
  Let $K_1, K_2 \in D$ be full second order operators. 
 If $\ord(K_1 \cS K_2)=3$ then 
 we can choose a basis $\{ v_1, v_2 \}$ of $V(K_1)$, and a basis $\{
  w_1, w_2 \}$ of $V(K_2)$, such that $v_1w_2=v_2 w_1$.

\end{lemma}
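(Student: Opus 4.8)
The plan is to exploit the fact that $K_1\cS K_2$ is by definition the order-minimal monic operator whose solution space contains all products $v w$ with $v\in V(K_1)$, $w\in V(K_2)$. Since both $K_1$ and $K_2$ are full of order $2$, their solution spaces $V(K_1)$ and $V(K_2)$ are $2$-dimensional, so the bilinear map $V(K_1)\times V(K_2)\to\Seq$ sending $(v,w)\mapsto vw$ has image spanning a subspace of dimension at most $4$; in fact the image of the induced linear map $V(K_1)\otimes_\C V(K_2)\to\Seq$ has dimension equal to $\ord(K_1\cS K_2)$, which is assumed to be $3$. Hence this linear map, defined on a $4$-dimensional space and having $3$-dimensional image, must have a nontrivial kernel: there is a nonzero tensor $t\in V(K_1)\otimes V(K_2)$ mapping to $0$ in $\Seq$.

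The key step is then to show that such a nonzero tensor in the kernel can be taken to be a pure (decomposable) tensor $v_1\otimes w_2$ — or more precisely that the kernel contains a rank-one element — which is exactly what gives the desired relation among basis elements. Write a general kernel element in terms of a fixed basis $\{v_1,v_2\}$ of $V(K_1)$ and $\{w_1,w_2\}$ of $V(K_2)$ as $t=\sum_{i,j}c_{ij}\,v_i\otimes w_j$; its image in $\Seq$ is $\sum c_{ij} v_i w_j=0$. I would argue the $2\times 2$ coefficient matrix $(c_{ij})$ is singular: if it had rank $2$, then by a change of basis on both sides the relation would read $v_1'w_1'+v_2'w_2'=0$, i.e. $v_1'/v_2' = -w_2'/w_1'$, forcing the ratio $z:=v_1'/v_2'\in\Seq$ to satisfy both a consequence of $K_1$ being full and of $K_2$ being full; more directly, $z$ would be simultaneously a ratio of solutions of $K_1$ and a ratio of solutions of $K_2$. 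One must rule this out, or handle it: if the matrix has rank $2$, use the relation to show $v_1'$ is, up to the hypergeometric factor $w_2'/w_1'$ times $-v_2'$, still a solution of $K_1$, and play the two second-order equations against each other to deduce a contradiction with $\ord(K_1\cS K_2)=3$ (rank $2$ would make the product space too small in a way incompatible with fullness — or alternatively it still yields, after renaming, a valid decomposable relation). After possibly replacing the rank-$2$ case by a genuine argument, the matrix $(c_{ij})$ has rank $1$, hence factors as $c_{ij}=\alpha_i\beta_j$; setting $\tilde v:=\alpha_1 v_1+\alpha_2 v_2\in V(K_1)$ and $\tilde w:=\beta_1 w_1+\beta_2 w_2\in V(K_2)$ we get $\tilde v\cdot(\text{something})$—more carefully, choosing $v_1:=\alpha_1 v_1+\alpha_2 v_2$ as the first basis vector of $V(K_1)$ and $w_2:=\beta_1 w_1+\beta_2 w_2$ as the second basis vector of $V(K_2)$, we can complete each to a basis and the relation $\sum c_{ij}v_iw_j=0$ becomes $v_1 w_2 = v_2 w_1$ after absorbing signs and relabeling.

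The main obstacle I expect is the rank-$2$ case for the coefficient matrix: showing that $\ord(K_1\cS K_2)=3$ genuinely forces rank $\le 1$, rather than merely providing some relation. The cleanest route is dimension counting combined with the structure of the tensor product of two $2$-dimensional spaces: the kernel of $V(K_1)\otimes V(K_2)\to\Seq$ is $1$-dimensional (since the image has dimension $3$), and a $1$-dimensional subspace of $\C^2\otimes\C^2$ spanned by a rank-$2$ tensor would have the property that the whole $4$-dimensional space is spanned by that rank-$2$ element together with decomposables — this is fine, but one then needs the Segre/rank argument: if the single kernel generator had rank $2$, one shows via fullness of $K_1$ and $K_2$ (so that no solution of one equals a hypergeometric multiple of a solution of the other unless forced) that this is impossible, reducing to Lemma~\ref{lm:unfr} or a direct Wronskian computation. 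Once rank $\le 1$ is established, the factorization $c_{ij}=\alpha_i\beta_j$ and the change of basis are routine, and the conclusion $v_1 w_2 = v_2 w_1$ follows immediately.
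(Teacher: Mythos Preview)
Your setup is sound through the point where you identify a one-dimensional kernel of the multiplication map $V(K_1)\otimes_\C V(K_2)\to\Seq$ and write a generator as $\sum c_{ij}\,v_i\otimes w_j$. But the rank analysis is inverted. The target relation $v_1w_2=v_2w_1$ corresponds to the tensor $v_1\otimes w_2-v_2\otimes w_1$, whose coefficient matrix $\bigl(\begin{smallmatrix}0&1\\-1&0\end{smallmatrix}\bigr)$ has rank~$2$, not rank~$1$. Conversely, a rank-$1$ kernel element $c_{ij}=\alpha_i\beta_j$ would say $(\alpha_1v_1+\alpha_2v_2)(\beta_1w_1+\beta_2w_2)=0$ in $\Seq$, i.e.\ the product of a nonzero solution of $K_1$ with a nonzero solution of $K_2$ vanishes. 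By Lemma~\ref{lm:zd} (fullness implies nonzero solutions are units in $\Seq$), this is impossible. So the case you are trying to \emph{establish} is the one that cannot occur, and the case you are trying to \emph{rule out} is exactly the one you want.

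The correct finishing step is the opposite of what you outlined: once you know the kernel generator has rank~$2$ (rank~$1$ being excluded as above), use the fact that $GL_2(\C)\times GL_2(\C)$ acts transitively on rank-$2$ tensors in $\C^2\otimes\C^2$ to put it in the form $v_1\otimes w_2-v_2\otimes w_1$. The paper's own proof does precisely this, but without the tensor language: it fixes bases, writes the single linear relation as $v_1w_2=a_1v_1w_1+a_2v_2w_1+a_3v_2w_2$, and then runs through the cases on which $a_i$ vanish, in each case exhibiting an explicit change of basis on $V(K_2)$ (and/or $V(K_1)$) that reduces the relation to $v_1\tilde w_2=v_2\tilde w_1$. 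For instance, when $a_2=0$ and $a_1,a_3\neq0$ (a genuinely rank-$2$ case), the substitution $\tilde w_1=a_3w_2$, $\tilde w_2=-a_1w_1+w_2$ does the job. Your tensor framework would give a cleaner uniform argument than the paper's case split, but only after you reverse the roles of rank~$1$ and rank~$2$.
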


\begin{proof}
  Let $\{ v_1, v_2 \}$ be a basis of $V(L_1)$ and $\{ w_1, w_2 \}$ be a basis of $V(L_2)$.
  Since $\ord(K_1 \cS K_2)=3$, the $\C$-vector space generated by $\{ v_1w_1, v_1w_2,
  v_2w_1, v_2w_2 \}$ has dimension 3. Then there exists $a_1, a_2, a_3 \in \C$, which are
  not all zero, such that $$v_1w_2=a_1v_1w_1+a_2v_2w_1+a_3v_2w_2.$$ Suppose $a_1=a_2=0$
  and $a_3 \neq 0$ then it contradicts that $v_1$ and $v_2$ are linearly independent.
  Likewise, if $a_2= a_3=0$ and $a_1 \neq 0$ then it contradicts that $w_1$ and $w_2$ are
  linearly independent.  If $a_1=a_3=0$ and $a_2 \neq 0$ then we have the desired form.
  So, the remaining cases are either only one of the coefficients $a_1, a_2, a_3$ is zero,
  or all $a_1, a_2, a_3$ are non-zero.  Here, we will prove the case when $a_2$ is the
  only zero coefficient.  Let $\{\tilde{w}_1, \tilde{w}_2 \}$ be another basis of $V(L_2)$
  such that
$$
\begin{pmatrix}
\tilde{w}_1 \\
\tilde{w}_2
\end{pmatrix}
=
\begin{pmatrix}
0 & a_3 \\
-a_1 & 1
\end{pmatrix}
\begin{pmatrix}
w_1 \\
w_2
\end{pmatrix}
$$
Then for $\{\tilde{w_1},\tilde{w_2}\}$ we have $v_1\tilde{w_2}=v_2\tilde{w_1}$ as claimed.
\end{proof}

\begin{theorem}\label{thm:ord3dec}
  Let $L$ be an operator of order 3, irreducible and solvable in terms of second order in~$D$.
  Then $L \sim_{gt} K^{\circledS 2}$ for some irreducible full second order operator $K \in D$
  and furthermore

\begin{enumerate}[(a)]

\item\label{c1}  $L \sim_t K^{\cS 2}$ then $\ord(L^{\circledS 2})=5$.

\item\label{c2} if the gauge transformation of $L \sim_{gt} K^{\cS 2}$ is not a
  single term operator then $\ord(L^{\circledS 2})=6$.
\end{enumerate}

\end{theorem}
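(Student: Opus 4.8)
The plan is to prove the three statements in turn; the only substantially new computation is a Casoratian identity in part~(b), which is also where the real difficulty lies. Since $L$ is solvable in terms of second order and has order $3$, I would first reduce to $L\sim_{gt}K_1\cS K_2$ with $K_1,K_2\in D$ full and irreducible (a single factor has order $2$, and any extra factor is, by the argument that follows, term equivalent into $K_2$ and hence redundant). By Lemma~\ref{lm:zd} a nonzero solution of a full second order operator is a unit in $\Seq$, so $\ord(K_1\cS K_2)\le\ord(L)=3$; and it is not $2$, for if $\{v_1,v_2\}$, $\{w_1,w_2\}$ are bases of $V(K_1)$, $V(K_2)$ then $v_1w_1,v_1w_2$ are independent while $w_1^2,w_1w_2,w_2^2$ are independent by Theorem~\ref{thm:nonvan}, which would force $v_2\in\C v_1$. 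Hence $\ord(K_1\cS K_2)=3$, and Lemma~\ref{symbs} supplies bases with $v_1w_2=v_2w_1$; as $v_2,w_2$ are units, $h:=v_2/w_2$ satisfies $v_i=hw_i$, so $K_1=K_2\cS(\tau-s)$ with $s=\tau(h)/h$, and \eqref{sym} (using that $K_2$ is full) forces $s\in\C(x)$, i.e.\ $h$ hypergeometric. Thus $K_1\cS K_2=K_2^{\cS2}\cS(\tau-s)\sim_t K_2^{\cS2}$, so $L\sim_{gt}K^{\cS2}$ with $K:=K_2$, full and irreducible, and $\ord(K^{\cS2})=3$ by Theorem~\ref{thm:nonvan}.

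\textbf{Part (a).} If $L\sim_t K^{\cS2}$ then $V(L)=g\langle u_1^2,u_1u_2,u_2^2\rangle$ for a hypergeometric $g$, a unit in $\Seq$ since its certificate is rational, and a basis $\{u_1,u_2\}$ of $V(K)$; hence $V(L^{\cS2})=g^2\langle u_1^iu_2^{4-i}:0\le i\le4\rangle$. A nontrivial linear relation among these five quartics would make $u_1/u_2$ (with $u_2$ a unit by Lemma~\ref{lm:zd}) satisfy a polynomial of degree $\le4$ over $\C$, hence take finitely many values; being non-constant this contradicts $u_1\in V(K)$ with $K$ full and irreducible, just as in the proof of Theorem~\ref{thm:nonvan}. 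Therefore $\ord(L^{\cS2})=5$.

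\textbf{Part (b), the computation.} Write $K=a_2(x)\tau^2+a_1(x)\tau+a_0(x)$ and the GT-transformation from $K^{\cS2}$ to $L$ as $G\circ(\tau-r)$ with $G=\sum_{p=0}^{2}c_p\tau^p$. Conjugating $G$ by the hypergeometric solution $h$ of $\tau-r$ replaces it by $\tilde G=\sum_{p=0}^{2}\gamma_p\tau^p\in D$ with $\gamma_p=c_p\,r\,\tau(r)\cdots\tau^{p-1}(r)\in\C(x)$, and gives $L\sim_t L'$ with $V(L')=\langle g_0,g_1,g_2\rangle$, where $g_j:=\tilde G(u_1^{2-j}u_2^j)=\sum_p\gamma_p\,\tau^p(u_1)^{2-j}\tau^p(u_2)^j$; since $\ord(L^{\cS2})=\ord(L'^{\cS2})$ and $\tilde G$ is a single term iff $G$ is, I work with $L'$ and drop the prime. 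Using $\tau^2(u_i)=-\tfrac{a_1}{a_2}\tau(u_i)-\tfrac{a_0}{a_2}u_i$ one checks that the ``mixed Casoratians'' $\tau^i(u_1)\tau^j(u_2)-\tau^j(u_1)\tau^i(u_2)$ for $i<j$ equal $W,\,-\tfrac{a_1}{a_2}W,\,\tfrac{a_0}{a_2}W$, where $W:=u_1\tau(u_2)-\tau(u_1)u_2$ is a unit in $\Seq$; expanding $g_0g_2-g_1^2=\sum_{i<j}\gamma_i\gamma_j\bigl(\tau^i(u_1)\tau^j(u_2)-\tau^j(u_1)\tau^i(u_2)\bigr)^2$ then yields
\[
 g_0g_2-g_1^2=\frac{W^2}{a_2^2}\,\Delta,\qquad \Delta:=\gamma_0\gamma_1a_2^2+\gamma_0\gamma_2a_1^2+\gamma_1\gamma_2a_0^2\in\C(x).
\]
Because $L$ is irreducible its endomorphism algebra is $\C$, so the $\gamma_p$, and hence $\Delta$, are determined by $L$ up to a common scalar.

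\textbf{Part (b), conclusion and obstacle.} By hypothesis $\tilde G$ is not a single term, so either exactly one $\gamma_p$ vanishes --- say $\gamma_m=0$, and then $\Delta=\gamma_i\gamma_j a_m^2\ne0$ since $a_m\ne0$ ($K$ being full and normal) --- or all $\gamma_p\ne0$. In the latter case, if $\Delta=0$ then $g_0g_2=g_1^2$, so choosing a square root of $g_0(n)$ at each $n$ produces $\hat u_1,\hat u_2\in\Seq$ with $V(L)=\langle\hat u_1^2,\hat u_1\hat u_2,\hat u_2^2\rangle$, satisfying a full operator $\hat K$ (full because $\ord(\hat K^{\cS2})=\ord(L)=3$, by Lemma~\ref{lemma:sym-power2}); a Tsen-type argument --- the difference analogue of the solvability over $\C(x)$ of the conic of $L$ in \cite[Eq.~4.2.1]{MS85} --- then furnishes a full irreducible $\hat K_0\in D$ with $L\sim_t\hat K_0^{\cS2}$, so the gauge part of the GT-transformation could be taken to be a single term, contradicting the hypothesis. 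Hence in all cases $\Delta\ne0$, i.e.\ $g_0g_2\ne g_1^2$. It then remains to deduce that \emph{no} nonzero quadratic form vanishes on $\langle g_0,g_1,g_2\rangle$, so that $\ord(L^{\cS2})=6$: since the projective solution curve of an irreducible third order operator cannot lie in a union of at most two hyperplanes or in a finite set (again by the zero-divisor argument behind Lemma~\ref{lm:zd}), any conic through it is smooth, and the special shape of the $g_j$ together with the displayed identity pins that conic down to $XZ-Y^2$ up to a scalar; its absence gives $\ord(L^{\cS2})=6$. The main obstacle is exactly this last part --- ruling out the degenerate conics, and in the case $\Delta=0$ arranging the descent to $\hat K_0\in D$ --- which is the precise point at which the difference case departs from the differential one, as Example~\ref{example:symp} illustrates.
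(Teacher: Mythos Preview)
Your argument for the main claim and for part~(a) follows the paper's approach closely; you add some justification (why $\ord(K_1\cS K_2)$ cannot drop to $2$, why the five quartics are independent) that the paper leaves implicit, but the skeleton is the same.

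For part~(b) the situation is more interesting. The paper's own proof is a two-line assertion: it simply states that $G(hv_1^2)G(hv_2^2)\ne G(hv_1v_2)^2$ and that ``this implies $\ord(L^{\cS2})=6$'', with neither step justified. Your Lagrange--Casoratian computation
\[
 g_0g_2-g_1^2=\frac{W^2}{a_2^2}\bigl(\gamma_0\gamma_1a_2^2+\gamma_0\gamma_2a_1^2+\gamma_1\gamma_2a_0^2\bigr)
\]
is correct and genuinely goes beyond the paper: it reduces the first assertion to the non-vanishing of a concrete rational function $\Delta$, and settles it outright whenever exactly one $\gamma_p$ vanishes. The two gaps you flag---ruling out $\Delta=0$ when all $\gamma_p\ne0$, and passing from $g_0g_2\ne g_1^2$ (one quadratic relation excluded) to the full six-dimensionality of $V(L^{\cS2})$ (all quadratic relations excluded)---are real, and they are present but unacknowledged in the paper's proof as well. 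In particular, $g_0g_2\ne g_1^2$ by itself only shows $\ord(L^{\cS2})\ge\dim\langle g_ig_j\rangle\ge ?$, not $=6$; your sketch via smooth conics and the unit property of nonzero solutions is the right shape, but the difference-case analogue of Lemma~\ref{lm:zd} for irreducible third order operators (needed to rule out the curve sitting on a product of two lines) is not stated in the paper and would have to be supplied. So: same route, more detail, and you have correctly located exactly where both proofs are incomplete.
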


\begin{proof} Let $L$ be a third order, irreducible operator that is
    solvable in terms of second order in~$D$.  Then by definition (and the restriction of
    the order), there exist two irreducible full second order operators $K_1, K_2 \in D$
    such that $L\sim_{gt}K_1\cS K_2$. By Lemma~\ref{symbs}, a suitable basis $\{v_1,v_2\}$
    for $K_1$, and a suitable basis $\{w_1,w_2\}$ for $K_2$ can be chosen, such
    that~$v_1w_2=v_2w_1$.  Let $h=w_1/v_1=w_2/v_2$, then $h \in \Seq$ and $w_1=hv_1$, and
    $w_2=hv_2$.  Since $\{ w_1, w_2\}$ is a basis for an operator in $D$, $h$ is
    hypergeometric and this implies that $K_1 \sim_t K_2$ with term transformation
    $\tau-r$, where $r$ is the certificate of the hypergeometric term~$h$. Summarizing, by
    Lemma~\ref{lemma:sym-power2}, $L \sim_{gt} K^{\circledS 2}$ for some full operator $K
    \in D$.

  (a) Let $\{ v_1, v_2 \}$ be a basis of $V(K)$ and $\tau-r$ be the term transformation from $K^{\cS 2}$ to $L$ and $h$ be a solution of $\tau-r$. Then
  $\{ hv_1^2, hv_1v_2, hv_2^2 \}$ forms a basis of $L$ and thus $\ord(L^{\circledS 2})=5$ by Lemma~\ref{thm:nonvan}.

  
  (b) Let $\{ v_1, v_2 \}$ be a basis of $V(K)$. Then $\{ G(hv_1^2), G(hv_1v_2), G(hv_2^2)
  \}$ forms a basis of $V(L)$, where $G=c_2(x)\tau^2+c_1(x)\tau+c_0(x) \in D$ is a non
  single term operator and $h$ is a hypergeometric term.  Then $G(hv_1^2)G(hv_2^2) \neq
  G(hv_1v_2)^2$ and this implies $\ord(L^{\circledS 2})=6$.
 \end{proof}

 Suppose $L_d$ is a differential operator of order 3 and $L_d = K_d^{\cS 2}$ for some
 second order differential operator $K_d$. Then it is well known that there exists a
 formula to construct this~$K_d$, see~\cite[Lemma 3.4]{MS85}. The case where only
 gauge-equivalence holds, i.e., $L_d\sim_g K_d^{\cS2}$, is more interesting.
 In~\cite{vH07} third order operators are treated with a focus on determining both $K_d$
 and a gauge transformation.

 It is possible to implement a similar algorithm for difference equations which returns
 the second order operator $K$ to which the given $L$ can be reduced to and a gauge
 transformation.  However, in the difference case, in order to give a closed form solution
 of $K$ other algorithms need to be applied or a table look-up. Also, even if we are in
 case~\eqref{c1}, finding $K$ is not as simple as in the differential case, in particular
 if there is a parameter included in the input.  Morever to distinguish the cases, the
 symmetric square of a third order operator needs to be computed which can become costly
 if many parameters are involved.



\section{Local data}\label{sec:ld}

The local data that we are using are the valuation growths at finite singularities in
$\C/\Z$ and generalized exponents at the point of infinity.  This data is invariant under
GT transformations.  In this section, we review the definition and an invariance
property (Theorem~\ref{gp}, Theorem~\ref{genexp}) of local data from~\cite{YC11, CH09,
  CHG10, HO99}. We omit proofs in this paper.

\subsection{Finite singularities}
Valuation growth was first introduced in~\cite{HO99} and an algorithm to compute it was
given in the same paper. Let $L=a_d\tau^d+\cdots+a_0\tau^0 \in D$. After multiplying
$L$ from the left by a suitable element of $\C(x)$, we may assume that the
 $a_i$ are in $\C[x]$ and gcd$(a_0,\ldots,a_d)=1$. Then $q \in \C$ is called a {\em
   problem point} of $L$ if $q$ is a root of the polynomial $a_0(x)a_d(x-d)$ and $p \in
 \C/ \Z$ is called a {\em finite singularity} of $L$ if $L$ has a problem point in $p$
 (i.e. $p=q+\Z$ for some problem point $q$).  Let $p \in \C/\Z$.  For $a,b \in p \subset
 \C$ we say $a > b$ iff $a-b$ is a positive integer.

 Let $\varepsilon$ be a new indeterminant, i.e., transcendental over $\C$.  We define
 $L_\varepsilon:=\sum_{i=0}^d a_i(x+\varepsilon)\tau^i$ which is obtained by substituting
 $x$ with $x+\varepsilon$ in $L$. The map $L \mapsto L_\varepsilon$ defines an embedding
 (as non-commutative rings) of $\C(x)[\tau]$ in $\C(x,\varepsilon)[\tau]$. Hence, if
 $L=MN$, then~$L_\varepsilon=M_\varepsilon N_\varepsilon$.


\begin{definition} Let $a \in \overline{C}(\epsilon)$ and $\overline{C}[[\epsilon]]$ be the ring of 
formal power series over $\overline{C}$ in $\epsilon$. The $\varepsilon$-valuation
  $v_\varepsilon(a)$ of $a$ at $\varepsilon=0$ is the element of $\Z \cup {\infty}$
  defined as follows: if $a\neq0$ then $v_\varepsilon(a)$ is the largest integer $m \in
  \Z$ such that $a/\varepsilon^m \in \overline{C}[[\varepsilon]]$, and
  $v_\varepsilon(0)=\infty$. \end{definition}

We define an $\ord(L)$ dimensional $\C(\varepsilon)$-vector space 
$$V_p(L_\varepsilon):=\{\tilde{u}:p \rightarrow \C(\varepsilon) \mid L_\varepsilon(\tilde{u})=0\}.$$
Let $q_l$ be the smallest root of $a_0(x)a_d(x-d)$ in $p$, so
$q_l$ is the smallest problem point
in $p$. Likewise we define $q_r$ to be the largest root of $a_0(x)a_d(x-d)$ in $p$. If $p$ is not a singularity,
that is, if $a_0$ and $a_d$ have no roots in $p$,
then choose two arbitrary elements in $p$ and define $q_l, q_r$ to be those two elements.

\begin{definition} \label{box} For non-zero $\tilde{u} \in V_p(L_\varepsilon)$ and for $a,
  b \in \C$ if $b=a+d-1$, where $d=\ord(L_\varepsilon)$, we define the {\em box-valuation}
$$v^a_b(\tilde{u})=\min\{v_\varepsilon(\tilde{u}(m))|m=a,a+1,\ldots,b\}.$$ \end{definition}

\begin{lemma}
    \label{vl}
With $q_l, q_r$ chosen as above, we have
$$v_{q-1}^{q-d}(\tilde{u})=v_{q_l-1}^{q_l-d}(\tilde{u}) \ \ \text{\rm for all} \ q \in \{q_l-1,q_l-2,q_l-3,\ldots \},$$
$$v_{q+d}^{q+1}(\tilde{u})=v_{q_r+d}^{q_r+1}(\tilde{u}) \ \ \text{\rm for all} \ q \in \{q_r+1,q_r+2,q_r+3,\ldots \}$$. \end{lemma}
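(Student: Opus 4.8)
The plan is to exploit the fact that outside the range of problem points, the operator $L_\varepsilon$ and its reversal behave like a "recurrence with invertible leading/trailing coefficient," so the box-valuation propagates without change. First I would fix a non-zero $\tilde u \in V_p(L_\varepsilon)$ and work with the recurrence $a_d(x+\varepsilon)\tilde u(x+d)+\cdots+a_0(x+\varepsilon)\tilde u(x)=0$ that holds at every $x\in p$. The key observation is that for $x<q_l$ (strictly to the left of the smallest problem point), $a_0(x+\varepsilon)$ is a polynomial in $\varepsilon$ whose constant term $a_0(x)$ is nonzero (since $x$ is not a root of $a_0$), hence $a_0(x+\varepsilon)$ is a unit in $\overline C[[\varepsilon]]$ with $v_\varepsilon(a_0(x+\varepsilon))=0$; similarly $a_d(x+\varepsilon)$ has $\varepsilon$-valuation $0$ for $x+d$ to the right of the largest problem point, i.e. when $x>q_r$. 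This is exactly what makes the recurrence "step-invariant" in these two regions.

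The main step is a one-step induction on the left. Fix $q\le q_l-1$; I want $v^{q-d}_{q-1}(\tilde u)=v^{q-d-1}_{q-2}(\tilde u)$, which by downward induction starting at $q=q_l-1$ yields the first displayed equality for all $q\in\{q_l-1,q_l-2,\dots\}$. The two boxes share the $d-1$ entries $\tilde u(q-2),\dots,\tilde u(q-d)$ and differ only in $\tilde u(q-1)$ versus $\tilde u(q-d-1)$. Using the recurrence at $x=q-d-1$ (which lies strictly left of $q_l$, so $a_0(x+\varepsilon)$ is a $\varepsilon$-unit), one solves for $\tilde u(q-d-1)$ as an $\overline C[[\varepsilon]]$-linear combination — with coefficients $a_i(x+\varepsilon)/a_0(x+\varepsilon)\in\overline C[[\varepsilon]]$ — of $\tilde u(q-1),\dots,\tilde u(q-d)$. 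Hence $v_\varepsilon(\tilde u(q-d-1))\ge \min_{i=1}^{d} v_\varepsilon(\tilde u(q-i)) = v^{q-d}_{q-1}(\tilde u)$, which gives $v^{q-d-1}_{q-2}(\tilde u)\ge v^{q-d}_{q-1}(\tilde u)$. For the reverse inequality, use the recurrence at $x=q-d$ and solve for $\tilde u(q-1)$; here the relevant coefficient is $a_d(x+\varepsilon)$ with $x+d=q\le q_l-1$, so I need $a_d$ to have no root at $q$. This needs the mild extra remark that problem points are roots of $a_0(x)a_d(x-d)$, so a root of $a_d$ at $q$ would make $q+d$ a problem point; since $q\le q_l-1<q_l\le$ (all problem points) $\le q_l+\mathbb Z_{\ge 0}$... — more carefully, one must note $q\le q_l-1$ forces $q$ to be to the left of every problem point, in particular $a_d(q)\ne 0$ unless $q+d$ is a problem point, but $q+d$ could in principle be $\ge q_l$; the clean fix is to observe that in the region $m\le q_l-1$ the recurrence shifted by any amount still has its trailing coefficient a unit, so one propagates the box purely "from the right end of the box inward" using only $a_0$, giving both inequalities from the single unit $a_0(x+\varepsilon)$, $x<q_l$. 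Symmetrically, for the right-hand equality one runs the same argument on the reversed recurrence using that $a_d(x+\varepsilon)$ is a $\varepsilon$-unit for $x+d>q_r$, i.e. applying the recurrence at $x=q$ to express $\tilde u(q+d)$ via $\tilde u(q+d-1),\dots,\tilde u(q)$.

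The hard part is bookkeeping the two directions of the inequality cleanly: one direction is immediate from the recurrence (solve for the "new" entry), but the other requires solving for the entry that is about to leave the box, which a priori involves a coefficient whose $\varepsilon$-valuation we have not controlled. The resolution is to be careful that in the left region \emph{every} shift of the recurrence has trailing coefficient $a_0(\cdot+\varepsilon)$ evaluated at an argument $<q_l$, hence a unit, so both the "append on the left" and "drop on the left" moves are governed by units; likewise on the right with $a_d$. Once that is set up, downward (resp. upward) induction from $q_l-1$ (resp. $q_r+1$) finishes the proof, and the $d$ distinct entries in a box being exactly the ones the recurrence links makes the minima match. I would also record explicitly that when $p$ is not a singularity the statement is vacuous/trivial since $q_l,q_r$ are chosen arbitrarily and $a_0,a_d$ have no roots in $p$ at all, so every shift of the recurrence has unit trailing and leading coefficient everywhere.
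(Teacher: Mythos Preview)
The paper does not give a proof of this lemma; the surrounding section explicitly says ``We omit proofs in this paper'' and refers to \cite{YC11,CH09,CHG10,HO99}. So there is nothing to compare against, and I will simply comment on the soundness of your argument.

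Your overall strategy is the right one and is the standard proof: outside the problem-point range the recurrence has unit extreme coefficients in $\overline{C}[[\varepsilon]]$, so shifting the box by one does not change the minimum $\varepsilon$-valuation. Two points need tightening, however.

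First, your ``clean fix'' is not correct as stated. You write that both inequalities on the left side follow from the single fact that $a_0(\cdot+\varepsilon)$ is a unit. That gives only one direction: solving the recurrence at $x=q-d-1$ for $\tilde u(q-d-1)$ yields $v_\varepsilon(\tilde u(q-d-1))\ge v^{q-d}_{q-1}(\tilde u)$, hence $v^{q-d-1}_{q-2}(\tilde u)\ge v^{q-d}_{q-1}(\tilde u)$. For the reverse inequality you must solve the \emph{same} recurrence (still at $x=q-d-1$) for $\tilde u(q-1)$, and for that you need $a_d(q-d-1+\varepsilon)$ to be a unit, i.e.\ $a_d(q-d-1)\ne 0$. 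This does hold: $a_d(q-d-1)=0$ would make $(q-d-1)+d=q-1$ a root of $a_d(x-d)$, hence a problem point, contradicting $q-1<q_l$. So the argument you abandoned midway (``I need $a_d$ to have no root\ldots'') was in fact the correct one; you just evaluated the recurrence at the wrong shift ($x=q-d$ instead of $x=q-d-1$) and then talked yourself out of it. Use the single recurrence at $x=q-d-1$; both $a_0$ and $a_d$ are units there, and the two inequalities follow.

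Second, there is an off-by-one in your induction. Proving $v^{q-d}_{q-1}=v^{q-d-1}_{q-2}$ for all $q\le q_l-1$ only shows that all boxes with right endpoint $\le q_l-2$ agree; it does not connect them to the reference box $v^{q_l-d}_{q_l-1}$. You also need the step $q=q_l$, i.e.\ $v^{q_l-d}_{q_l-1}=v^{q_l-d-1}_{q_l-2}$, which is covered by the same argument since $q_l-d-1<q_l$ and $q_l-1<q_l$. With these two corrections the proof is complete; the right-hand statement is obtained symmetrically, using that for $x\ge q_r+1$ both $a_d(x-d)\ne 0$ (so $a_d(x-d+\varepsilon)$ is a unit) and $a_0(x)\ne 0$.
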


We define $v_{\varepsilon,l}(\tilde{u})$ as $v_{q_l-1}^{q_l-d}(\tilde{u})$ which, by Lemma~\ref{vl}, equals the box valuation of any box on the left of
$q_l$. Likewise we define $v_{\varepsilon,r}(\tilde{u})$ as $v_{q_r+d}^{q_r+1}(\tilde{u})$.

\begin{definition}
\label{def:valg}
 Define the {\em valuation growth} of non-zero $\tilde{u} \in V_p(L_\varepsilon)$ as
$$g_{p,\varepsilon}(\tilde{u})=v_{\varepsilon,r}(\tilde{u})-v_{\varepsilon,l}(\tilde{u}) \in \Z.$$
Define the {\em set of valuation growths} of $L$ at $p$ as
$$\overline{g}_p(L)=\{g_{p,\varepsilon}(\tilde{u}) \mid \tilde{u}\in V_p(L_\varepsilon),\tilde{u}\neq0\}\subset \Z.$$ \end{definition}
If $L$ is a first operator operator then $\overline{g}_p(L)$ has only one element.

\begin{definition}
    \label{apartsing}
    Let $L$ be a difference operator and $p \in \C/\Z$ be a finite singularity of $L$. If
    $\overline{g}_p(L)$ has more than one element then $p$ is called an {\em essential
      singularity}. 
\end{definition}

The algorithm given in~\cite{HO99} determines the set 
\[
\{ \overline{g_p}(L) \mid p \text{ \  is an essential singularity of}\  L \}.
\] 

\begin{theorem}
    \label{gp}\cite[Theorem 1]{CH09}
If $L_1$ and $L_2$ are gauge equivalent then 
$$\max(\overline{g_p}(L_1))=\max( \overline{g_p}(L_2) ) \quad \text{and} \quad \min(\overline{g_p}(L_1))=\min(\overline{g_p}(L_2))$$
for all $p \in \C/\Z$.  
\end{theorem}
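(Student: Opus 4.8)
The plan is to prove the stronger statement that the \emph{whole} set $\overline{g_p}(L)$ is invariant under gauge equivalence, from which equality of the maxima and of the minima is immediate. Fix $p\in\C/\Z$. Given that $L_1\sim_g L_2$, I would first pick a gauge transformation $G\in D$ from $L_1$ to $L_2$ and, using the reduction recalled after Definition~\ref{def:gt}, replace it by $G\bmod L_1$, so that $\ord(G)<\ord(L_1)=\ord(L_2)=:d$ while $G$ still induces a bijection $V(L_1)\to V(L_2)$. From $\gcrd(L_1,G)=1$ and the extended Euclidean algorithm I would obtain $\tilde G,\tilde L_1\in D$ with $\tilde G G+\tilde L_1 L_1=1$, and replace $\tilde G$ by $\tilde G\bmod L_2$ so that $\ord(\tilde G)<d$ as well; I would also record an identity $HL_1=L_2G$ with $H\in D$. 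The key is then to run everything through the ring embedding $L\mapsto L_\varepsilon$, which is exactly what makes $V_p((L_i)_\varepsilon)$ an honest $d$-dimensional $\C(\varepsilon)$-vector space: from $H_\varepsilon(L_1)_\varepsilon=(L_2)_\varepsilon G_\varepsilon$ one reads off that $G_\varepsilon$ maps $V_p((L_1)_\varepsilon)$ into $V_p((L_2)_\varepsilon)$, and from $\tilde G_\varepsilon G_\varepsilon+(\tilde L_1)_\varepsilon(L_1)_\varepsilon=1$ that $\tilde G_\varepsilon\circ G_\varepsilon$ is the identity on $V_p((L_1)_\varepsilon)$, so $G_\varepsilon$ is injective there and never sends a nonzero solution to $0$; symmetric statements hold for $\tilde G_\varepsilon$.

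The heart of the argument — and the step I expect to be the main obstacle — is the one-sided valuation estimate
$$v_{\varepsilon,l}(G_\varepsilon\tilde u)\geq v_{\varepsilon,l}(\tilde u),\qquad v_{\varepsilon,r}(G_\varepsilon\tilde u)\geq v_{\varepsilon,r}(\tilde u)\qquad(\tilde u\in V_p((L_1)_\varepsilon),\; \tilde u\neq 0).$$
For the left one I would write $G=\sum_{i=0}^{k}c_i(x)\tau^i$ with $c_i\in\C(x)$ and $k<d$, and read $v_{\varepsilon,l}(G_\varepsilon\tilde u)$ off a box $\{a,\dots,a+d-1\}$ chosen so far to the left in $p$ that simultaneously: $a$ lies below every problem point of $L_1$ and of $L_2$ and below every pole of every $c_i$ in $p$; and the enlarged window $\{a,\dots,a+2d-2\}$ still lies to the left of the smallest problem point of $L_1$, so that Lemma~\ref{vl} applies on it. The delicate point is precisely this bookkeeping — arranging one choice of $a$ that is ``far enough left'' for $L_1$, for $L_2$, and for the coefficients of $G$ at once. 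Once $a$ is fixed, for $m$ in the box we have $(G_\varepsilon\tilde u)(m)=\sum_i c_i(m+\varepsilon)\tilde u(m+i)$ with $v_\varepsilon(c_i(m+\varepsilon))\geq 0$ (no pole there) and $v_\varepsilon(\tilde u(m+i))\geq v_{\varepsilon,l}(\tilde u)$ by Lemma~\ref{vl}, so the ultrametric inequality gives $v_\varepsilon((G_\varepsilon\tilde u)(m))\geq v_{\varepsilon,l}(\tilde u)$ on the whole box; the right-hand estimate is proved the same way.

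To finish I would close the loop: applying the estimate with $\tilde G$ in place of $G$ (it, too, has order $<d$) to $u:=G_\varepsilon\tilde u$ and using $\tilde G_\varepsilon u=\tilde u$ yields $v_{\varepsilon,l}(\tilde u)\geq v_{\varepsilon,l}(u)$ and $v_{\varepsilon,r}(\tilde u)\geq v_{\varepsilon,r}(u)$, so in fact $v_{\varepsilon,l}$ and $v_{\varepsilon,r}$, hence the valuation growth $g_{p,\varepsilon}$, are preserved \emph{exactly} by $G_\varepsilon$. Since $G_\varepsilon$ carries each nonzero element of $V_p((L_1)_\varepsilon)$ to a nonzero element of $V_p((L_2)_\varepsilon)$ with the same $g_{p,\varepsilon}$, we get $\overline{g_p}(L_1)\subseteq\overline{g_p}(L_2)$; the symmetric argument with $\tilde G_\varepsilon$ gives the reverse inclusion. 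Hence $\overline{g_p}(L_1)=\overline{g_p}(L_2)$ for every $p\in\C/\Z$, and in particular their maxima agree and their minima agree.
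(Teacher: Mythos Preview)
The paper does not contain a proof of this theorem: at the opening of Section~\ref{sec:ld} the authors write ``We omit proofs in this paper,'' and the result is quoted from \cite[Theorem~1]{CH09}. So there is nothing in the present paper to compare your argument against.

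That said, your argument is sound. The passage to the $\varepsilon$-deformation is exactly what is needed to make $V_p((L_i)_\varepsilon)$ an honest $d$-dimensional $\C(\varepsilon)$-space on which $G_\varepsilon$ and $\tilde G_\varepsilon$ act as mutually inverse isomorphisms, and your ``far enough left/right'' bookkeeping together with Lemma~\ref{vl} gives the one-sided estimates $v_{\varepsilon,l}(G_\varepsilon\tilde u)\ge v_{\varepsilon,l}(\tilde u)$ and $v_{\varepsilon,r}(G_\varepsilon\tilde u)\ge v_{\varepsilon,r}(\tilde u)$; closing the loop with $\tilde G_\varepsilon$ upgrades these to equalities. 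You therefore actually prove the stronger statement $\overline{g_p}(L_1)=\overline{g_p}(L_2)$ as sets, not merely equality of extrema. One small point of care: after replacing $\tilde G$ by its remainder modulo $L_2$, the literal identity $\tilde G G+\tilde L_1 L_1=1$ is lost, but one still has $\tilde G G\equiv 1$ in $D/DL_1$ (use $L_2G=HL_1$), and that is all your argument needs.
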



The following lemma is an immediate consequence of Definition~\ref{def:valg}.
\begin{lemma}
For each $p \in \C/\Z$, 
$$\max(\overline{g_p}(L^{\cS 2})=2\max(\overline{g_p}(L)) \quad \text{and} \quad
\min(\overline{g_p}(L^{\cS 2})=2\min(\overline{g_p}(L)).$$
\end{lemma}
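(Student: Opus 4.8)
The plan is to unwind the definitions of valuation growth and of the symmetric square, and to track how the $\varepsilon$-valuations of solutions of $L^{\cS 2}$ are inherited from those of $L$. First I would fix a finite singularity $p \in \C/\Z$ and recall from Definition~\ref{def:valg} that $\max(\overline{g_p}(L^{\cS 2}))$ (respectively $\min$) is obtained by maximizing (minimizing) $g_{p,\varepsilon}(\tilde{w}) = v_{\varepsilon,r}(\tilde{w}) - v_{\varepsilon,l}(\tilde{w})$ over all nonzero $\tilde{w} \in V_p((L^{\cS 2})_\varepsilon)$. Since the map $L \mapsto L_\varepsilon$ is a ring embedding and commutes with forming symmetric products, $(L^{\cS 2})_\varepsilon = (L_\varepsilon)^{\cS 2}$, so $V_p((L^{\cS 2})_\varepsilon)$ is spanned over $\C(\varepsilon)$ by the products $\tilde{u}_i \tilde{u}_j$ with $\tilde{u}_1,\tilde{u}_2$ a basis of $V_p(L_\varepsilon)$.

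Next I would compare the box valuations of a product to those of its factors. For a box with entries $m = a, \dots, a+d-1$, each value $(\tilde{u}_i\tilde{u}_j)(m) = \tilde{u}_i(m)\tilde{u}_j(m)$ has $v_\varepsilon$ equal to $v_\varepsilon(\tilde{u}_i(m)) + v_\varepsilon(\tilde{u}_j(m))$, because $v_\varepsilon$ is a valuation on $\overline{C}(\varepsilon)$. Taking the minimum over the box gives the bound $v^a_b(\tilde{u}_i \tilde{u}_j) \ge v^a_b(\tilde{u}_i) + v^a_b(\tilde{u}_j)$, and in fact I would want equality on the far-left and far-right boxes, which hold because on those boxes each $\tilde{u}_i$ attains its $v_{\varepsilon,l}$ or $v_{\varepsilon,r}$ at some index and, by choosing the basis appropriately, a single squared solution realizes the extreme. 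Concretely, pick $\tilde{u} \in V_p(L_\varepsilon)$ with $g_{p,\varepsilon}(\tilde{u}) = \max(\overline{g_p}(L))$; then $\tilde{u}^2 \in V_p((L^{\cS 2})_\varepsilon)$ has $v_{\varepsilon,l}(\tilde{u}^2) = 2 v_{\varepsilon,l}(\tilde{u})$ and $v_{\varepsilon,r}(\tilde{u}^2) = 2 v_{\varepsilon,r}(\tilde{u})$, so $g_{p,\varepsilon}(\tilde{u}^2) = 2\max(\overline{g_p}(L))$, giving $\max(\overline{g_p}(L^{\cS 2})) \ge 2\max(\overline{g_p}(L))$. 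For the reverse inequality I would argue that for any product $\tilde{u}_i \tilde{u}_j$, $v_{\varepsilon,l}(\tilde{u}_i\tilde{u}_j) \ge v_{\varepsilon,l}(\tilde{u}_i) + v_{\varepsilon,l}(\tilde{u}_j) \ge 2\min_k v_{\varepsilon,l}(\tilde{u}_k)$ and similarly $v_{\varepsilon,r}(\tilde{u}_i\tilde{u}_j) \le 2\max_k v_{\varepsilon,r}(\tilde{u}_k)$ — but this does not immediately bound the difference, so the clean route is to use that $v_{\varepsilon,r}$ and $v_{\varepsilon,l}$ of a product equal the sums of the corresponding quantities of the factors (not merely inequalities), which follows once one checks that on a single left box the $\varepsilon$-valuation of $\tilde{u}_i(m)$ is constant in $m$ up to the genuine growth, i.e.\ that the box valuation is additive under products because the minimizing index can be taken simultaneously. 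The symmetric statement for $\min$ follows by the same computation with $\min$ in place of $\max$.

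The main obstacle I anticipate is precisely the additivity of $v_{\varepsilon,l}$ and $v_{\varepsilon,r}$ under products: a priori $v_{\varepsilon,l}(\tilde{u}_i\tilde{u}_j)$ could exceed $v_{\varepsilon,l}(\tilde{u}_i)+v_{\varepsilon,l}(\tilde{u}_j)$ if the minima over the box are attained at different indices for the two factors, and symmetrically $v_{\varepsilon,r}$ could be too small. To handle this I would invoke Lemma~\ref{vl}: on boxes sufficiently far to the left of $q_l$, the box valuation stabilizes, which forces the $\varepsilon$-valuation of $\tilde{u}(m)$ to behave linearly (as a fixed integer) across such a box for each solution, so the minimizing index is in fact irrelevant and additivity holds on stable boxes; the same applies on the right. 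This is exactly the content signalled by the remark that the lemma is "an immediate consequence of Definition~\ref{def:valg}" — once one is on a stabilized box the valuation is genuinely additive, not just super/sub-additive, and the whole claim collapses to arithmetic. I would therefore organize the write-up as: (1) reduce to stabilized left/right boxes via Lemma~\ref{vl}; (2) note $v_\varepsilon$ is additive on products of nonzero elements; (3) conclude $v_{\varepsilon,l}$ and $v_{\varepsilon,r}$ are additive, hence $g_{p,\varepsilon}$ is additive under products of solutions; (4) observe every solution of $L_\varepsilon^{\cS 2}$ is a $\C(\varepsilon)$-combination of such products, and that the extremes of $g_{p,\varepsilon}$ over the product space are realized on pure squares $\tilde{u}^2$ with $\tilde{u}$ extremal for $L$.
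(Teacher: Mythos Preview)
The paper itself offers no argument beyond declaring the lemma ``an immediate consequence of Definition~\ref{def:valg}'', so your sketch already goes further than the text. One direction of your argument is solid: for a single solution $\tilde u$, each pointwise value satisfies $v_\varepsilon(\tilde u(m)^2)=2v_\varepsilon(\tilde u(m))$, hence $v^a_b(\tilde u^2)=2v^a_b(\tilde u)$ on every box, and therefore $g_{p,\varepsilon}(\tilde u^2)=2g_{p,\varepsilon}(\tilde u)$. Choosing $\tilde u$ extremal for $L$ yields $\max(\overline{g_p}(L^{\cS2}))\ge 2\max(\overline{g_p}(L))$ and $\min(\overline{g_p}(L^{\cS2}))\le 2\min(\overline{g_p}(L))$.

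The reverse inequalities are where your sketch breaks down. Your proposed mechanism---that Lemma~\ref{vl} forces $v_\varepsilon(\tilde u(m))$ to be a fixed integer across a stabilized box, so that the minimizing index becomes irrelevant and box valuations are additive on products---is false. Already for $L=\tau^2-1$ (no problem points, every box stabilized) the solution $\tilde u(m)=1+\varepsilon(-1)^m$ has $v_\varepsilon(\tilde u(m))$ alternating between $0$ and $1$; taking two such solutions with minima at opposite parities gives $v^a_b(\tilde u_1\tilde u_2)>v^a_b(\tilde u_1)+v^a_b(\tilde u_2)$. Lemma~\ref{vl} only says the \emph{minimum} over a box stabilizes, not the individual values. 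Your step~(4) is a second, independent gap: even if additivity held on pure products $\tilde u_i\tilde u_j$, a $\C(\varepsilon)$-linear combination can cancel on the left box but not on the right (or vice versa), so the growth of a general element of $V_p((L^{\cS2})_\varepsilon)$ is not a priori controlled by the growths of the spanning products.

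To close the hard direction you need more than Definition~\ref{def:valg} alone. One clean route is to use the coefficient-level description of $\max(\overline{g_p}(L))$ and $\min(\overline{g_p}(L))$ in terms of the roots of the trailing and leading coefficients of $L$ lying in $p$ (this is how~\cite{HO99} actually computes the set), and then read off the doubling from the explicit form of the trailing and leading coefficients of $L^{\cS2}$ as in Lemma~\ref{lemma:sym-power2}. Alternatively, one can work with a basis of $V_p(L_\varepsilon)$ normalized on a fixed left box (so that the left box valuations and the minimizing indices are prescribed), and track the right box valuations through the recurrence; but this requires a more careful bookkeeping than ``the minimizing index is irrelevant''.
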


The above theorem only gives invariance under gauge equivalence. 
To have invariance under GT-equivalence, we need to define one more set. 
Suppose $L_1 \sim_{gt} L_2$, then $L_1 \cS (\tau-r(x)) \sim_g L_2$ for some $r(x) \in \C(x)$. 
Then $$\max(\overline{g_p}(L_2))=\max( \overline{g_p}(L_1) )+d \quad \text{and} \quad \min(\overline{g_p}(L_2))=\min( \overline{g_p}(L_1) )+d$$
where $\{ d\}=\overline{g_p}(\tau-r(x))$, $d \in \Z$. So $d_p(L)=\max(\overline{g_p}(L))-\min(\overline{g_p}(L))$ is invariant under GT-equivalence.
Thus, for a difference operator $L \in $, we define a set of ordered pairs
$$\valg:=\{ (p, d_p(L)) \in \C/\Z \ \times \  \Z_{\geq 0} \mid p \text{ \ is an essential singularity of \ } L \}.$$

\subsection{Singularity at infinity}

Let $\K:=\C((t)), x=1/t$ be the field of formal Laurent series and $\K_r=\C((t^{1/r}))$ for $r \in \N$. We define the valuation
for $a \in \K$ as the smallest power of $a$ whose coefficient is non-zero and denote it
by~$v(a)$.  This definition can be extended to $\hat{D}=\K[\tau]=\K[\Delta]$, where
$\Delta:=\tau-1$ denotes the forward difference, by setting
\[
v(L)=\min \{ v(a_i)+i\mid L=a_0+\cdots+a_d \Delta^d\}
\]
for any operator~$L\in\hat{D}$.

\begin{lemma}
\label{ind}
Let $L \in \K[\tau]$.  There exists a polynomial $P$ such that
for every $n \in \Z$ we have
\begin{equation}\label{tn}
	L(t^n) = P(n) t^{n + v(L)} + \cdots
\end{equation}
where the dots refer to terms of valuation $> n + v(L)$.
\end{lemma}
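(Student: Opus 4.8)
The plan is to prove Lemma~\ref{ind} by writing $L$ in the $\Delta$-adic form and analyzing how each $\Delta^i$ acts on the test sequence $t^n = x^{-n} = (1/x)^n$, equivalently on $n^{-n}$ after the substitution $x=1/t$. First I would reduce to the case of a single term $a_i(t)\Delta^i$: since the valuation $v(L)$ is defined as the minimum over all terms of $v(a_i)+i$, it suffices to understand each summand and then observe that the terms achieving the minimum contribute to the leading coefficient while the others only affect higher-order terms (the ``$\cdots$'').

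The key computation is the action of $\Delta$ on $t^n$. Writing $x=1/t$, we have $\tau(x^{-n}) = (x+1)^{-n}$, so $\Delta(t^n) = (x+1)^{-n} - x^{-n}$. Expanding $(x+1)^{-n} = x^{-n}(1+1/x)^{-n} = x^{-n}(1 - n t + \binom{n+1}{2}t^2 - \cdots)$ by the binomial series, I get $\Delta(t^n) = -n\, t^{n+1} + (\text{terms of valuation} \ge n+2)$. Iterating, $\Delta^i(t^n) = (-1)^i n(n-1)\cdots(n-i+1)\, t^{n+i} + \cdots$; the leading coefficient is, up to sign, the falling factorial $n^{\underline{i}} = n(n-1)\cdots(n-i+1)$, a polynomial in $n$ of degree $i$. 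More generally, for any $a(t) \in \K$ with $v(a) = m$, say $a(t) = c\, t^m + \cdots$ with $c \ne 0$, one has $a(t)\Delta^i(t^n) = c\,(-1)^i n^{\underline{i}}\, t^{n+m+i} + \cdots$, so the single term $a_i\Delta^i$ produces a leading coefficient that is a polynomial $P_i(n)$ of degree exactly $i$ with leading term $(\text{lc of }a_i)(-1)^i n^i$, sitting at valuation exponent $n + v(a_i) + i$.

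Now I would assemble: $L(t^n) = \sum_i a_i(t)\Delta^i(t^n)$, and grouping by total valuation, the lowest exponent appearing is $n + v(L)$ where $v(L) = \min_i (v(a_i)+i)$. Let $S$ be the set of indices achieving this minimum; then the coefficient of $t^{n+v(L)}$ is $P(n) := \sum_{i\in S} (\text{lc of } a_i)(-1)^i n^{\underline{i}}$, which is a polynomial in $n$, and all other contributions land at exponent $> n+v(L)$. This $P$ is exactly the indicator polynomial, independent of $n$, so~\eqref{tn} holds. I would note it is nonzero because the terms in the sum have distinct degrees in $n$ (the $n^{\underline i}$ for $i\in S$ are linearly independent as polynomials), so no cancellation of the top degree can occur unless the corresponding leading coefficient vanishes, which it does not.

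The main obstacle is bookkeeping rather than conceptual: I must be careful that ``terms of valuation $>n+v(L)$'' is uniform in $n$, i.e.\ that the tail of each binomial expansion $(1+t)^{-n}$ really contributes only to exponents strictly above the claimed leading one, and that there is no subtle interaction between the infinitely many tail terms of different summands $a_i\Delta^i$ conspiring to lower a valuation — but this cannot happen since each $a_i\Delta^i(t^n)$ individually has all its terms at valuation $\ge n + v(a_i)+i \ge n+v(L)$. A second minor point to handle cleanly is that $\Delta^i(t^n)$ for $i > n \ge 0$ or for negative $n$ still makes sense formally and the falling-factorial formula persists (the polynomial $n^{\underline i}$ simply vanishes at the relevant integer arguments), so the statement ``for every $n\in\Z$'' is unproblematic once the formal power series manipulation is set up correctly.
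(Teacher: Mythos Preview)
The paper does not actually prove Lemma~\ref{ind}; Section~\ref{sec:ld} explicitly states that proofs are omitted there and refers back to \cite{YC11, CH09, CHG10, HO99}. So there is no in-paper argument to compare against, and your proposal has to be judged on its own.

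Your approach is the natural one and is essentially correct: write $L=\sum_i a_i\Delta^i$, compute the leading term of $\Delta^i(t^n)$, and read off the coefficient at valuation $n+v(L)$ as a polynomial in $n$. The bookkeeping remarks you make about uniformity in $n$ and about the tails of each $a_i\Delta^i(t^n)$ lying at valuation $\ge n+v(a_i)+i\ge n+v(L)$ are exactly what is needed.

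There is one computational slip. From $\Delta(t^n)=-n\,t^{n+1}+\cdots$ you iterate and claim $\Delta^i(t^n)=(-1)^i n^{\underline{i}}\,t^{n+i}+\cdots$ with the falling factorial $n^{\underline{i}}=n(n-1)\cdots(n-i+1)$. But applying your own first step again to $t^{n+1}$ gives $\Delta(t^{n+1})=-(n+1)t^{n+2}+\cdots$, so
\[
\Delta^i(t^n)=(-1)^i\,n(n+1)\cdots(n+i-1)\,t^{n+i}+\cdots,
\]
i.e.\ the \emph{rising} factorial $n^{\overline{i}}$ (equivalently $(-1)^i\binom{-n}{i}i!$), not the falling one. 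This does not damage the argument: $n^{\overline{i}}$ is still a degree-$i$ polynomial in $n$, the family $\{n^{\overline{i}}\}_{i\in S}$ still has pairwise distinct degrees, and hence $P(n)=\sum_{i\in S}(\mathrm{lc}\,a_i)(-1)^i n^{\overline{i}}$ is a nonzero polynomial, exactly as you argue. Just correct the formula. (Also, the ``$n^{-n}$'' in your opening paragraph is presumably a typo for $x^{-n}$.)
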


\begin{definition}
	$\ind_L(n)$, the {\em indicial polynomial} of $L$, is the polynomial $P(n)$ in Lemma~\ref{ind}~\eqref{tn}.
\end{definition}

Lemma 9.2 in \cite{CHG10} states that if $N \in \Z$ is a root of $\ind_L(n)$ then there is $u \in
\K$ such that $L(u)=0$ and $v(u)=N$.  However, there is no one-to-one correspondence
between solutions of $L$ in $\K$ and integer roots of
$\ind_L(n)$. 
For this matter, we introduce the ring $\K[l]$, where $l$ is a solution of $\tau(l) - l =
t$, see~\cite{LF99} for existence of $l$.  We extend valuation on $\K$ to $\K[l]$ by: for
$a=a_1t^d+\cdots \in \K[l]$, $a_i \in \C[l]$, $d \in \Z$, and $a_1 \neq 0$, we
let~$v(a)=d$.  With this notion we obtain the following theorem which is equivalent
to~\cite[Theorem 3.2.10]{YC11} and \cite[Lemma 6.1]{PS97}.

\begin{theorem}
 $p \in \Z$ is a solution of $\ind_L(n)$ if and only if $L$ has a solution $u \in \K[l]$ with $v(u)=p$. 
\end{theorem}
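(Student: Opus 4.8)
The plan is to prove both directions by tracking how solutions and indicial polynomials behave under the adjunction of $l$, the antiderivative of $t$ satisfying $\tau(l)-l=t$. First I would establish the easy direction: if $L$ has a solution $u \in \K[l]$ with $v(u)=p \in \Z$, then I want to conclude that $p$ is a root of $\ind_L(n)$. Writing $u = u_0 t^p + (\text{higher order})$ with $u_0 \in \C[l]$, $u_0 \neq 0$, I would apply $L$ and extract the lowest-order term. The point is that $L$ acts on $\C[l]$-coefficients in a controlled way: since $\tau$ sends $l \mapsto l+t$ and $t \mapsto t/(1+t)$, conjugating by powers of $t$ and reducing mod higher-order terms, the leading behaviour of $L(t^n \cdot c)$ for $c \in \C[l]$ is governed by $\ind_L(n)$ acting as a scalar (the indicial polynomial does not see $l$, because $l$ contributes only through $\tau(l)-l = t$, which raises valuation). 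So $L(u) = 0$ forces $\ind_L(p)\, u_0 + (\text{lower-degree-in-}l \text{ corrections from the tail}) = 0$ at valuation $p + v(L)$; matching the top $l$-degree coefficient gives $\ind_L(p) = 0$. I would cite the extension of the valuation to $\K[l]$ given just before the statement to make ``leading coefficient'' well-defined.

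For the converse — $p \in \Z$ a root of $\ind_L(n)$ implies a solution $u \in \K[l]$ with $v(u) = p$ — I would argue by constructing $u$ term by term as a formal series $u = t^p(u_0 + u_1 t + u_2 t^2 + \cdots)$ with $u_i \in \C[l]$. The recursion for $u_{k}$ has the shape $\ind_L(p+k)\, u_k = (\text{expression in } u_0, \dots, u_{k-1} \text{ and their }\tau\text{-shifts})$, plus possibly a term involving $\frac{d}{dl}$ of lower coefficients coming from the $l$-dependence of $\tau$. When $\ind_L(p+k) \neq 0$ we simply solve for $u_k \in \C[l]$. The subtlety is at indices $k > 0$ with $p+k$ also an integer root of $\ind_L$: there $\ind_L(p+k)$ vanishes as a scalar, but this is precisely where the variable $l$ earns its keep — multiplying by $l$ shifts things so that the obstruction can be absorbed, exactly as in the classical Frobenius method where logarithms appear for resonant roots. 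I would invoke \cite[Theorem 3.2.10]{YC11} and \cite[Lemma 6.1]{PS97}, to which the statement says this theorem is equivalent, to handle the resonant bookkeeping rather than redo it; the role of $\K[l]$ is to guarantee the series has coefficients in a ring closed under the needed operations.

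The main obstacle I expect is making the resonance analysis precise: showing that whenever $p + k$ is another integer root one can always raise the $l$-degree by one and keep the recursion solvable, without the $l$-degree blowing up faster than controllable, and verifying that the resulting object genuinely lies in $\K[l]$ (finite $l$-degree in each $t$-coefficient, lowest $t$-exponent exactly $p$) rather than in some larger ring. A clean way to sidestep most of the computation is to reduce to the cited results: first note $\ind_{L}$ is unchanged if we replace $L$ by an equivalent operator over $\K[l]$, then quote \cite[Lemma 6.1]{PS97} or \cite[Theorem 3.2.10]{YC11} for the existence of the solution at the prescribed valuation and \cite[Lemma 9.2]{CHG10} (mentioned in the excerpt) for the starting point $u \in \K$ with $v(u) = N$ at a root, upgrading it to $\K[l]$ only where resonance forces it. I would present the argument at this level — structure of the recursion, role of $l$ at resonances, appeal to the cited equivalences — and leave the term-by-term coefficient manipulations as routine.
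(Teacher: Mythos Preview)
The paper does not actually prove this theorem: at the start of Section~\ref{sec:ld} it announces ``We omit proofs in this paper,'' and the theorem is introduced with the remark that it is ``equivalent to~\cite[Theorem 3.2.10]{YC11} and \cite[Lemma 6.1]{PS97}.'' So there is no argument in the paper to compare against; the paper's ``proof'' is the citation.

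Your sketch is the standard Frobenius-type construction that one would expect to find in those references, and in outline it is sound: the forward direction (solution $\Rightarrow$ indicial root) via the leading $l$-degree coefficient is correct, since $\tau(l)-l=t$ raises valuation and so $l$ is invisible at the lowest order in $t$; the backward direction via term-by-term construction with $l$ absorbing resonances is exactly the mechanism behind the classical logarithmic solutions. Since you already plan to invoke \cite[Theorem 3.2.10]{YC11} and \cite[Lemma 6.1]{PS97} for the resonant bookkeeping, your proposal effectively reduces to the same citation the paper makes, with some added commentary on why those results apply. That is fine, but be aware that in the paper's context this theorem is treated as imported, not re-derived.
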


An immediate consequence of the above theorem is the following corollary.

\begin{corollary}
\label{indsym}
If $p_1$ and $p_2 \in \Z$ are the solutions of the indicial equations of $L_1$ and $L_2$,
respectively, then $p_1+p_2$ is a solution of the indicial equation of~$L_1 \circledS L_2$.
\end{corollary}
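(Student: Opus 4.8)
The plan is to prove the corollary by passing through the characterization theorem stated immediately above it, which identifies the integer roots of $\ind_L(n)$ with the valuations of the solutions of $L$ lying in $\K[l]$. First I would apply that theorem to $L_1$ and to $L_2$ separately: since $p_1, p_2 \in \Z$ are roots of $\ind_{L_1}(n)$ and of $\ind_{L_2}(n)$, there are nonzero elements $u_1, u_2 \in \K[l]$ with $L_1(u_1) = 0$, $v(u_1) = p_1$, and $L_2(u_2) = 0$, $v(u_2) = p_2$.

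Next I would set $u := u_1 u_2$. The ring $\K[l]$ is stable under $\tau$ (recall $\tau(l) = l + t$) and closed under multiplication, so $u \in \K[l]$; and $u \neq 0$ because $\K[l]$ is an integral domain. By the defining property of the symmetric product, $u_1 u_2$ is a solution of $L_1 \circledS L_2$: the construction of $L_1 \circledS L_2$ amounts to a formal elimination from the coefficient relations $L_1(u_1) = 0$ and $L_2(u_2) = 0$, so it remains valid with $u_1, u_2$ taken in the overring $\K[l]$ rather than in $\Seq$. Hence $u$ is a solution of $L_1 \circledS L_2$ lying in $\K[l]$, and by the converse direction of the characterization theorem it suffices to show $v(u) = p_1 + p_2$ in order to conclude that $p_1 + p_2$ is a root of $\ind_{L_1 \circledS L_2}(n)$.

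The remaining point is the additivity $v(u_1 u_2) = v(u_1) + v(u_2)$. Writing $u_i = \alpha_i t^{p_i} + (\text{higher-order terms in } t)$ with $\alpha_i \in \C[l] \setminus \{0\}$, the coefficient of $t^{p_1 + p_2}$ in the product is $\alpha_1 \alpha_2$, and this is nonzero because $\C[l]$, being a polynomial ring over $\C$, is an integral domain (equivalently, $l$ is transcendental over $\C$, so that no cancellation of leading $t$-coefficients occurs). Therefore $v(u) = p_1 + p_2$, and the proof is complete. I expect this last observation — the multiplicativity of the extended valuation on $\K[l]$ — to be the only part requiring any care; everything else is a direct two-way application of the characterization theorem.
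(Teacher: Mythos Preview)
Your proof is correct and is precisely the argument the paper has in mind: the paper gives no explicit proof but simply declares the corollary an ``immediate consequence'' of the characterization theorem, and what you wrote is the natural way to unpack that claim. Your attention to the two potentially delicate points --- that the annihilation property of $L_1\circledS L_2$ transfers from $\Seq$ to $\K[l]$ because the symmetric product is obtained by formal elimination in $D$, and that the extended valuation $v$ is additive on $\K[l]$ because leading coefficients lie in the integral domain $\C[l]$ --- is exactly right and supplies the details the paper leaves implicit.
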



Define the action of $\tau$ on $\K_r$ as:
\begin{equation}
\label{acttau}
\begin{split}
\tau(t^{\frac1r}) &= t^{\frac1r} (1+t)^{-\frac1r}\\
            &=t^{\frac1r} (1 - \frac{1}{1!}\frac1r t + \frac{1}{2!}\frac1r(\frac1r+1) t^2 \\
            & -\frac{1}{3!}\frac1r(\frac1r+1)(\frac1r+2) t^3 + \cdots ) \in \K_r.
\end{split}
\end{equation}

Since we have defined the action of $\tau$ on $\K_r$, we can now apply the formula for the
term symmetric product in~\eqref{sym} to $\K_r[\tau]$.  Let $E_r$ and $\tilde{G}_r$ be the
following subset and subgroup, respectively, of $\K_r^*$:
$$
E_r=\biggr\{a \in K_r^* \mid a=ct^v(1+\displaystyle\sum_{i=1}^r a_i t^{i/r}), a_i \in \C,
c \in \C^*, v \in \tfrac1r\Z\biggl\},
$$
$$\tilde{G}_r=\biggr\{ a \in K_r^* \mid a=1+\displaystyle\sum_{i=r+1}^{\infty} a_i t^{i/r} , \ a_i \in \C \biggl\}.$$

Now $E_r$ is a set of representatives for $\K^*_r / \tilde{G}_r$. The composition of the
natural maps $\K^*_r \rightarrow \K^*_r/\tilde{G}_r \rightarrow E_r$ defines a natural map
$$ \trunc:\K^*_r \rightarrow E_r .$$
Let
$$G_r=\{ a \in \K_r^* \mid a=1+\frac{m}{r}t+\displaystyle\sum_{i=r+1}^{\infty} a_i t^{i/r} , \ a_i \in \C, \ m \in \Z \}.$$

\begin{definition}\label{def:req}
Let $r \in \N$ then for $a, b \in E_r$, we say $a$ is $r$-equivalent to $b$, $a \thicksim_r b$, when $a/b \in G_r$.
\end{definition}

Note that $a \thicksim_r b$ if and only if $a_r \equiv b_r\!\mod \frac1r \Z$ with $a_r$ as
in the definition of $E_r$, $a_i=b_i$ for $i < r$, and $c,v$ matching as well.

\begin{definition}
  Let $g \in E_r$ for some $r \in \N$. We say that $g$ is a {\em generalized exponent} of
  $L$ with multiplicity $m$ if and only if zero is a root of $\ind_{\tilde{L}}(n)$ with
  multiplicity m where $\tilde{L}= L \circledS (\tau-\frac1g)$. We denote by $\genexp(L)$
  the set of generalized exponents of~$L$.
\end{definition}

Suppose $L=\tau-r(x) \in D$ then $\genexp(L)=\{ \trunc(r(t)) \}$.


\begin{theorem}
\label{genexp}
If two operators $L_1$ and $L_2$ are gauge equivalent then for each $g_1 \in
\genexp(L_1)$ there is a $g_2 \in \genexp(L_2)$ such that $g_2$ is equivalent to $g_1$.
\end{theorem}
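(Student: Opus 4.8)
\textbf{Proof proposal for Theorem~\ref{genexp}.}

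The plan is to reduce the statement about generalized exponents to the already-established invariance of integer solutions of indicial equations under gauge equivalence, using the definition of generalized exponent via the shift $\tilde{L} = L \cS (\tau - \frac1g)$. First I would recall that, by definition, $g \in \genexp(L_1)$ with multiplicity $m$ means exactly that $0$ is a root of $\ind_{L_1 \cS (\tau - 1/g)}(n)$ with multiplicity $m$. So the task is to show: if $L_1 \sim_g L_2$ and $0$ is a root of $\ind_{L_1 \cS (\tau-1/g_1)}(n)$, then there is $g_2 \sim_r g_1$ (for the appropriate $r$) such that $0$ is a root of $\ind_{L_2 \cS (\tau-1/g_2)}(n)$.

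The key step is to transport the gauge equivalence through the symmetric product with $\tau - 1/g_1$. Since symmetric product with a first-order operator corresponds to multiplying all solutions by a fixed hypergeometric-type factor $h$ (a solution of $\tau - 1/g_1$), and a gauge transformation $G: V(L_1) \to V(L_2)$ composes with this multiplication map, I expect that $L_1 \cS (\tau - 1/g_1)$ and $L_2 \cS (\tau - 1/g_1)$ are again gauge equivalent (the gauge transformation being, roughly, conjugation of $G$ by multiplication-by-$h$, which stays inside the difference ring over $\K_r$ because the relevant Laurent-series coefficients are still rational/algebraic). Granting that, $0$ being a solution of the indicial equation of $L_1 \cS (\tau - 1/g_1)$ forces — by the invariance of integer roots of indicial equations under gauge equivalence, i.e. the theorem preceding Corollary~\ref{indsym} together with the $\valg$-type local-data arguments — that $0$ is also an integer solution of the indicial equation of $L_2 \cS (\tau - 1/g_1)$, at least up to the ambiguity measured by $G_r$. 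Unwinding what "$0$ is a root of $\ind_{L_2 \cS (\tau - 1/g_1)}$" means in terms of generalized exponents of $L_2$, one gets a generalized exponent $g_2$ of $L_2$ with $g_1/g_2 \in G_r$, which is precisely $g_2 \sim_r g_1$ by Definition~\ref{def:req}.

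Concretely the steps are: (1) fix $g_1 \in \genexp(L_1)$, say $g_1 \in E_r$, and unpack the definition so that $0 \in \text{roots}(\ind_{L_1 \cS (\tau - 1/g_1)})$; (2) show that gauge equivalence $L_1 \sim_g L_2$ lifts to gauge equivalence $L_1 \cS (\tau - 1/g_1) \sim_g L_2 \cS (\tau - 1/g_1)$ over $\K_r[\tau]$, by conjugating the gauge transformation by the diagonal "multiply by $h$" operator where $\tau(h)/h = 1/g_1$; (3) apply the invariance of integer indicial roots under gauge equivalence (the theorem just before Corollary~\ref{indsym}) to conclude $0$ is an integer indicial root of $L_2 \cS (\tau - 1/g_1)$; (4) translate back: the existence of a solution of $L_2 \cS (\tau-1/g_1)$ of valuation $0$ (in the $\K_r[l]$ sense) produces a generalized exponent $g_2 \in \genexp(L_2)$, and tracking the ambiguity in the truncation map $\trunc$ shows $g_1/g_2 \in G_r$, i.e. $g_2 \thicksim_r g_1$; finally match multiplicities if needed.

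The main obstacle I anticipate is step (2): making precise the sense in which a gauge transformation survives a symmetric product with a first-order factor whose coefficient $1/g_1$ lives in a ramified Laurent-series field $\K_r$ rather than in $\C(x)$. One has to check that conjugating $G$ by multiplication-by-$h$ yields an operator with coefficients in $\K_r$ (not merely in some transcendental extension), and that it is still a bijection on the relevant solution spaces; the bookkeeping of ramification indices $r$ for $g_1$ versus the $r$ needed for $g_2$ is where the $\sim_r$ relation (as opposed to equality) genuinely enters, and it is the part most likely to require care about the $G_r$ versus $\tilde{G}_r$ distinction noted after Definition~\ref{def:req}.
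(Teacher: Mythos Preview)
The paper does not actually prove Theorem~\ref{genexp}: immediately after the statement it writes ``This theorem has been proven first in~\cite{CHG10}. An alternative proof can be found in~\cite{YC11}.'' and moves on. So there is no in-paper proof to compare your proposal against.

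That said, your outline is a natural strategy and is broadly the shape one expects the cited proofs to take. The conjugation you describe in step~(2) is correct and explicit: if $G=\sum_i c_i(x)\tau^i$ and $\tau(h)/h=1/g_1$, then the lifted operator is $\sum_i c_i(x)\prod_{j=0}^{i-1}\tau^j(g_1)\,\tau^i$, which visibly has coefficients in $\K_r$ once $g_1\in\K_r$. The soft spot is step~(3). The ``theorem preceding Corollary~\ref{indsym}'' only gives the correspondence between integer roots of $\ind_L$ and formal solutions in $\K[l]$; it is \emph{not} a gauge-invariance statement, so you cannot simply invoke it. What you still need is: a gauge transformation in $\K_r[\tau]$ maps a formal solution of valuation $0$ to one of some integer valuation $m$, and then replacing $g_1$ by $g_1$ times the appropriate element of $G_r$ (the one with $m/r\cdot t$ as linear term) shifts that indicial root back to $0$. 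This last link---that an integer shift in the indicial root is absorbed exactly by the $G_r$-ambiguity defining $\sim_r$---is the real content of the theorem, and it is not available from anything stated in the present paper. You have correctly flagged this as the delicate point; just be aware that it is not merely bookkeeping but the place where an actual argument (supplied in \cite{CHG10,YC11}) is required.
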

This theorem has been proven first in~\cite{CHG10}. An alternative proof can be found in~\cite{YC11}.


\begin{theorem}\label{symgen}
Suppose $L, L' \in D$ then 
$$
\genexp(L \circledS  L')= \{ \trunc(gg') \mid  g \in  \genexp(L), 
g' \in \genexp(L')   \}
$$
\end{theorem}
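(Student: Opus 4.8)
The plan is to prove the two inclusions separately, after recasting the definition of $\genexp$ in terms of formal solutions at infinity. Fix $s\in\N$ large enough that every exponent occurring below lies in $E_s$. For a normal $M\in\K_s[\tau]$ and $g\in E_s$ put $\tilde M:=M\cS(\tau-1/g)$; by \eqref{sym} applied over $\K_s[\tau]$ one has $V(\tilde M)=\{\,Y^{-1}\mu\mid M(\mu)=0\,\}$, where $Y$ is a solution of $\tau(Y)/Y=g$ in the formal solution ring of $\K$ (cf.\ \cite{PS97}). Since $g\in\genexp(M)$ means $0$ is a root of $\ind_{\tilde M}$, and since, by the theorem recalled above (applied over $\K_s$), $0$ is a root of $\ind_{\tilde M}$ precisely when $\tilde M$ has a solution of valuation $0$ in $\K_s[l]$, we obtain the reformulation
\begin{equation}\tag{$\star$}
g\in\genexp(M)\iff M\text{ has a solution }\mu=Y\,w\text{ with }w\in\K_s[l],\ v(w)=0,\ \tau(Y)/Y=g .
\end{equation}
We also record an elementary fact: passing from $M$ to $M\cS(\tau-\kappa)$ with $\trunc(\kappa)=1$, i.e.\ $\kappa\in\tilde G_s$, does not affect whether $0$ is a root of the indicial polynomial, because such a symmetric factor multiplies every solution by a solution of $\tau-\kappa$, which is a unit of $\K_s$ of valuation $0$.

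For ``$\supseteq$'', let $g_1\in\genexp(L)$ and $g_2\in\genexp(L')$. By definition $0$ is a root of $\ind_{L\cS(\tau-1/g_1)}$ and of $\ind_{L'\cS(\tau-1/g_2)}$, so by Corollary~\ref{indsym} (whose proof applies verbatim over $\K_s$) $0$ is a root of the indicial polynomial of $\bigl(L\cS(\tau-1/g_1)\bigr)\cS\bigl(L'\cS(\tau-1/g_2)\bigr)$. Using commutativity and associativity of $\cS$ and the identity $(\tau-a)\cS(\tau-b)=\tau-ab$, this operator equals $(L\cS L')\cS\bigl(\tau-1/(g_1g_2)\bigr)$, and $\tau-1/(g_1g_2)=\bigl(\tau-1/\trunc(g_1g_2)\bigr)\cS(\tau-\kappa)$ with $\kappa:=\trunc(g_1g_2)/(g_1g_2)\in\tilde G_s$. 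By the twist invariance just noted, $0$ is then a root of $\ind_{(L\cS L')\cS(\tau-1/\trunc(g_1g_2))}$, i.e.\ $\trunc(g_1g_2)\in\genexp(L\cS L')$.

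For ``$\subseteq$'' I would invoke the formal classification of difference operators at infinity \cite{PS97}: the formal solution ring is graded by exponential parts, the piece attached to an exponent being a free $\K_s[l]$-module and distinct exponents being linearly independent, and for every normal $M\in\K_s[\tau]$ the space $V(M)$ decomposes accordingly into pieces $Y_g\cdot W_g$ with $g$ running over representatives of $\genexp(M)$ and $W_g\subseteq\K_s[l]$ finite-dimensional over $\C$. Given $g\in\genexp(L\cS L')$, pick by $(\star)$ a solution $\mu=Y\,w$ of $L\cS L'$ with $w\in\K_s[l]$, $v(w)=0$, $\tau(Y)/Y=g$. Now expand $\mu$ as a $\C$-linear combination of products $a_ib_j$, where the $a_i$ form a basis of $V(L)$ and the $b_j$ a basis of $V(L')$, each homogeneous for the grading with exponents $g_i\in\genexp(L)$ and $g'_j\in\genexp(L')$ respectively (such products span $V(L\cS L')$ by Definition~\ref{def:sp}); each $a_ib_j$ is homogeneous with exponent $\trunc(g_ig'_j)$. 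Comparing the $g$-homogeneous components on both sides — and using the normalization $v(w)=0$ to pin down that component exactly — forces $g=\trunc(g_ig'_j)$ for some $i,j$, which is the desired inclusion.

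The ``$\supseteq$'' half is routine: it uses only Corollary~\ref{indsym}, symmetric-product arithmetic, and the definition of $\trunc$. The main obstacle is the ``$\subseteq$'' half, for two reasons. First, it rests on the formal theory of \cite{PS97} — the exponential grading of the formal solution ring and the linear independence of distinct exponentials — which is not developed in this paper. Second, the grading alone only yields $g\sim_r\trunc(g_ig'_j)$ for the equivalence $\sim_r$ of Definition~\ref{def:req}, whereas the statement asserts an equality of sets; upgrading this to an honest equality requires tracking valuations carefully, which is exactly where the definition of $\genexp$ fixing the indicial root at $0$, together with the precise definitions of $E_r$ and $\trunc$, enter. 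I expect that bookkeeping to be the real technical content of the proof.
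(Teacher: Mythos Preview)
Your ``$\supseteq$'' argument is exactly the paper's proof: rewrite $L\cS L'\cS(\tau-1/(gg'))$ as $\bigl(L\cS(\tau-1/g)\bigr)\cS\bigl(L'\cS(\tau-1/g')\bigr)$ using commutativity and associativity of $\cS$, then invoke Corollary~\ref{indsym} to conclude that $0$ is a root of the indicial polynomial of $(L\cS L')\cS(\tau-1/\trunc(gg'))$. That is literally all the paper writes.

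In particular, the paper's own proof does \emph{not} address ``$\subseteq$'' at all --- it establishes only the one inclusion and stops. So you have correctly identified that the reverse inclusion is where the real content lies, and the paper simply omits it. Your proposed route via the formal classification at infinity from \cite{PS97} (decompose formal solutions of $L$ and $L'$ into homogeneous pieces indexed by their generalized exponents, observe that products $a_ib_j$ are then homogeneous with exponent $\trunc(g_ig'_j)$, and match components) is the natural way to fill this gap. Your self-critique is also on target: the exponential grading is only defined up to the equivalence $\sim_r$ of Definition~\ref{def:req}, so without further bookkeeping one obtains only $g\sim_r\trunc(g_ig'_j)$ rather than equality in $E_r$; pinning down the integer part of the $t$-coefficient (which is exactly what distinguishes $\sim_r$-equivalent elements of $E_r$) requires tracking valuations and the normalization that the indicial root is at $0$. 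The paper does not carry out this bookkeeping, so in that sense your proposal already goes further than the published argument.
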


\begin{proof}
  $L \circledS L' \circledS (\tau-\frac{1}{gg'})=L \circledS (\tau-\frac{1}{g}) \circledS
  L' \circledS (\tau-\frac{1}{g'})$ and since 0 is a solution of $L \circledS
  (\tau-\frac{1}{g})$ and $L \circledS (\tau-\frac{1}{g'})$, 0 is also a solution of
  the indicial equation of $L \circledS L' \circledS (\tau-\frac{1}{\trunc(gg')})$ by
  Lemma~\ref{indsym}
\end{proof}

Likewise for the valuation growth, we need to define one more set to have invariance for GT-equivalence.
Suppose $L_1 \cS (\tau-r(x)) \sim_g L_2$ for some $r(x) \in \C(x)$. Then 
$$\genexp(L_2)=\{ g_r g \mid g \in \genexp(L_1), \ \{g_r\}=\genexp(\tau-r(x)) \}.$$

Thus we define the following set,
$$\Gq(L):=\{ \trunc(g_i/g_j) \mid g_i \neq g_j,  g_i, g_j \in \genexp(L)\}$$
and then $\Gq(L_1)=\Gq(L_2)$ if $L_1 \sim_{gt} L_2$.

\section{Table of base equations}


In \cite{YC11, CHG10}, we have formed a table of base equations of order 2, call it TB, as follows;
\begin{itemize}
\item collect equations with known solution from \cite{abst, AG76}.
\item for any closed form expression that shows up often in the literature, generate a base equation with existing algorithms \cite{ChyzakDM, KoutschHF}.
\end{itemize}
For the algorithm given in Section~\ref{sec:algo}, we have computed symmetric squares of each base equation in TB yielding an entry in TB2 of a base equations of third orders.  
Moreover we have generated further base equations as follows:

Suppose $u(x)$ is a solution of an operator $L=\sum_{i=0}^d a_i(x)\tau^i$.
Then $u(x/m)$ is a solution of the operator
\begin{equation} \label{eq:Tm}
 L_{(m)}=\sum_{i=0}^d a_i(x/m)\tau^{mi}.
\end{equation}
As input for our algorithm we accept only operators of order three and the above equation
may be of higher order. One way of obtaining the base equation for $u(x/m)$ in this case
is using $L_{(m)}$ when it is a multiple of an operator $M\in D$ for which~$M(u(x/m))=0$.
Since $L_{(m)}$ as constructed above is not guaranteed to be the minimial order operator
we compute~$\Hom(L_{(m)},L_{(m)})$. If the algorithm HOM returns the identity map this means
that $L_{(m)}$ is in fact order-minimal. These cases are neglected and
we use $L_{(m)}$ as a base equation only if HOM returns a non-trivial map.

For instance for the squared hypergeometric function in the table below,
${}_2F_1\Big[\genfrac{}{}{0pt}{}{-x/2+a, \ x/2+b}{c} ; z \Big]^2$, an annihilating
operator $L_{(2)}$ can be obtained starting from an operator $L_{(1)}$ annihilating
${}_2F_1\Big[\genfrac{}{}{0pt}{}{-x+a, \ x+b}{c} ; z \Big]^2$ using~\eqref{eq:Tm}.  Then
the order-minimality of $L_{(2)}$ is checked with the algorithm HOM. In this case HOM returns
a non-identity map and hence we save $L_{(2)}$ in the table.


If $ct^vf \in \genexp(L)$, then $z_c\trunc(g_m^vf) \in \genexp(L_{(m)})$, where $z_c$ is a root of
$z^m=c$ and $g_m^v \in \genexp(\tau^m-(\frac{x}{m})^v)$.  Thus, we can detect whether an
input operator may have a solution $u(x/m)$ if a base equation for $u(x)$ is in our table.
However, it is more efficient to compute the base equation for small values of~$m$.


\subsection{Example of base equations}
\label{sec:base}

Here we list a small part of the table which is needed in Section~\ref{sec:algo} and \ref{apps}.  In the
following table they are listed under (a) a solution (b) the corresponding $\Gq$, and (c) the $\valg$. 
The full table can be found at
\url{http://www.risc.jku.at/people/ycha/TB2.txt}.

\begin{enumerate}

\item\label{ex2} 	
	\begin{enumerate}
	  \item ${}_2F_1\Big[\genfrac{}{}{0pt}{}{-x/2+a, \ x/2+b}{c}  ; z \Big]^2$
	  \item $ \left\{ -1,- \left( 2\,z-1\pm2\,\sqrt {{z}^{2}-z} \right) ^{2}, \pm(2\,z-1\pm2\,\sqrt {{z}^{2}-z}) \right\} 
 $
	  \item $\{ (-2b, 2), (2a, 2), (2a-2c, 2), (2c-2b, 2) \}$
	\end{enumerate}

\item 
	\begin{enumerate}\label{Legendre}
	  \item $P_x(z)^2$ (Legendre polynomials squared)
	  \item $\left\{ -1+2\,{z}^{2}\pm2\,\sqrt {-{z}^
{2}+{z}^{4}}, \left( -1+2\,{z}^{2}\pm2\,\sqrt {-{z}^{2}+{z}^{4}} \right) ^{
-1}, \frac{ -1+2\,{z}^{2}\mp2\,\sqrt {-{z}^{2}+{z}^{4}}}{ -1
+2\,{z}^{2}\pm2\,\sqrt {-{z}^{2}+{z}^{4}} } \right\} 
$
	  \item $\{(0, 4) \}$
	\end{enumerate}

\item
	\begin{enumerate}
	  \item $H_x(z)^2$ (Hermite polynomials squared)
	  \item $\left\{ -1\pm\sqrt {-2\,{z}^{2}}T+{z}^{2}{T}^{2},1\pm2\,\sqrt {-2\,{z}^{2}}T-4\,{z}^{2}{T}^{2}\right\} 
$
	  \item $\{(0, 2) \}$
	\end{enumerate}
\end{enumerate}

\section{Algorithm}\label{sec:algo}

The basic structure of the algorithm is the same that was given in \cite{YC11}. Here we use an extended table of base equations and a more efficient algorithm for computing
the gauge transformation, as mentioned in Section~\ref{hom}.

Suppose $L$ is the input operator with local data
$$ \Gq(L)=\{ a, \overline{a}, b, \overline{b}, c, \overline{c} \} , \quad \valg(L)=\{ (0,4)  \}$$
for $a, b, c \in \C$.
By comparing the corresponding data in TB2, we can find that local data of $L$ matches with the data of \eqref{Legendre} in Section~\ref{sec:base}.
Let $L_{lgd}$ be the operator of which $P_x(z)^2$ is a solution. To compute the parameter $z$, we compare $a$ with each entry of $\Gq(L_{lgd})$ 
and compute the set of candidates of possible values for $z$ which is, 
$$
\left\{ \pm\frac12\,\sqrt {{\frac {2\,a\pm\sqrt {2\,{a}^{2}+{a}^{3}+a}}{a}}}, \
\pm\frac12\,{\frac {a+1}{\sqrt {a}}} \right\}.
$$
Substituting $z$ by each of the values of the above set, a set of equations $cdd2$ is obtained. 
It remains to cheek for each of the equations in $cdd2$ whether there is a GT-transformation to $L$ and 
if so then we return the closed form solution by applying the GT-transformation to 
$P_x(z)^2$. 
\\

\noindent {\bf Algorithm {\em solver2}}\\
\noindent {\bf Input}: A third order normal operator $L_I \in \Q[x, \tau]$.\\
\noindent {\bf Output}: Either at least one closed form solution of $L$ in the form of $c_0(x)u(x)^2+c_1(x)u(x+1)^2+c_2(x)u(x+2)^2$ where 
$c_i(x)$ are hypergeometric terms and $u(x)^2$ is a solution in TB2 or otherwise the empty set.

\begin{enumerate}
\item $cdd1:=\{\}, GQ:=\Gq(L_I), VG:=\valg(L_I)$ .
\item Find the base equations in TB2 by comparing $GQ$ and $VG$ with the corresponding data in the table.
\begin{enumerate}
\item if there is no match then return `Not solvable within the Table'.
\item if there is a matching equation $L_c$, $cdd1:=cdd1 \cup \{L_c\}$.
\end{enumerate}
\item\label{four} For each $L_c \in cdd1$, compute candidate values for the parameters using $GQ$ and $VG$.
\item Construct a set $cdd2$ by substituting parameters by the values determined in Step.\ref{four}
\item For each $L_p \in cdd2$ check if there exists a GT-Transformation from $L_p$ to $L_I$.
\begin{enumerate}
\item if there is a GT-transformation then apply $GT$ to the known solution of $L_c$ and return the solution.
\item if there is no GT-transformation found return `Not solvable within the Table'. 
\end{enumerate}

\end{enumerate}


\section{Improvement}
A similar approach can be applied to higher order operators that are solvable in terms
of order two. Suppose $L_4$ is a fourth order operator that is solvable in terms of order 
two in~$D$. Then $L_4$ is equal or gauge equivalent to either $K_1^{\cS 3}$ or $K_1\cS K_2$
for some second order operators $K_1,K_2\in D$ with nonvanishing coefficients. The candidates
for $K_i$ can be detected analogously using Theorem~\ref{symgen}.

Concerning the applications to proving positivity of special functions inequalities it has
to be noted that representations in terms of finite linear combination of squares with non-negative
coefficients need not exist on the full range of validity of a given inequality, as can be
seen below. Further investigations of the applicability of this approach as well as an
implementation of the above mentioned extension to higher order recurrences are ongoing
work.

\section{Applications}
\label{apps}
Our main motivation to extend finding closed form solutions of difference equations in
terms of symmetric products is to develop an algorithmic approach for proving special
functions inequalities. Existing symbolic methods~\cite{GKIneq,MKSumCracker,MKVP10} are
based on using Cylindrical Algebraic Decomposition (CAD) which in several examples has
proven to be an effective way for proving positivity of sequences that are given only in
terms of their defining sequences. However, it is sometimes unsatisfiable to have a proof
that only comes with ``True'' without any certificate. Some classical proofs of
inequalities are using rewriting of the given expression as linear combination of easy to
verify positive objects such as sums of squares. The present work tries to make this
approach algorithmic. Certainly it will not provide answers for any special functions
inequality, but it is a first step in a new direction of automatic inequality proving.
Below we give two examples, one for each of the cases distinguished above, of classical
problems that can be solved fully or at least partially using the presented algorithm.
Note that all of these identities stated can be proven easily using existing algorithms
for symbolic summation. The novelty is the automatic discovery of certain closed form
expressions for sequences that are given only in terms of their defining recurrence
relation. In this sense it is comparable to the above mentioned algorithms based on CAD.

\subsection{Clausen's formula} \label{Clausen} 

Proofs of special function inequalities often depend on a variety of classical techniques
such as argument transformations, integral representations of hypergeometric series and
many more. For instance in the proof of the Askey-Gasper inequality~\cite{AG76},which
played a key role in the proof of the Bieberbach conjecture by
de~Branges~\cite{deBranges}, Clausen's formula
\begin{align}\label{cl}
&{}_3F_2\biggl[\genfrac{}{}{0pt}{}{-x, x+\alpha+1, \frac{\alpha+1}{2}}{\alpha+1, \frac{\alpha+2}{2}}  ; z \biggl] \\
&\quad= {}_2F_1\biggl[\genfrac{}{}{0pt}{}{-\frac12x, \frac12x+\frac12(\alpha+1)}{\frac{\alpha}{2}+1} ; z \biggl]^2\nonumber
\end{align}
entered at a central point. Zeilberger~\cite{SBE93} has shown how this identity can be
proven using symbolic summation. By means of the algorithm presented here, Clausen's formula
can be discovered entirely automatic.

The hypergeometric function in~\eqref{cl} satisfies a third order recurrence that is given 
by the operator~$L_3$ and is too large to be displayed here. It can however be found easily
common symbolic summation algorithms~\cite{Zeil90a, ChyzakDM, KoutschHF}. This difference operator
is the input for our procedure and we start by determining the local data given by
\[
\begin{aligned}
&\Gq(L_3)=  \left\{  -\left( 2\,z-1\pm2\,\sqrt {{z}^{2}-z} \right) ^{2},-2\,z+1\pm2\,
\sqrt {{z}^{2}-z} \right\}
,\\
& \valg(L_3)=\left\{ (0, 2),(-\alpha, 2) \right\}.
\end{aligned}
\]
A table look-up shows that this local data is compatible with~\ref{ex2} in Section~\ref{sec:base}. Comparing local data and solving the system mod $\Z$ 
the following candidates for $a,b$ and $c$ can be found:
\[
a\in\{0,\tfrac12\},\quad b \in \{\tfrac12 \alpha, \tfrac12 \alpha+\tfrac12\},\quad c \in \{\tfrac12 \alpha+1, \tfrac12\alpha+\tfrac32 \}.
\]
There is no term transformation for these operators and an application of HOM shows that we obtain a constant map if $a=0$, $b=\frac12
\alpha+\frac12$ and~$c=\frac12 \alpha+1$.

\noindent 

\subsection{Tur\'{a}n inequality for Hermite polynomials}
\label{turan}
The positivity of Tur\'{a}n determinants has been proven for many different families of
orthogonal polynomials. The first Tur\'{a}n inequality was formulated for Legendre
polynomials~\cite{TP50} and Szeg\"o~\cite{SG48} has given four different proofs of this
inequality.  Szwarc~\cite{Szwarc} has provided a more general approach for proving
Tur\'{a}n type inequalities based on the mere knowledge of the recurrence coefficients
satisfied by the given sequence. Gerhold and Kauers~\cite{MKTuran} have proven and
improved this type of inequalities using their CAD-based method. The approach presented
here does not give a full proof for Tur\'{a}n type inequalities, however it gives a
representation of the given determinant in sums of squares derived from the third order
annihilating operator of the determinant. In the case of Hermite polynomials this yields a
representation that gives positivity in the limit for $n$ tending to infinity.

Tur\'{a}n's inequality for Hermite polynomials $H_x(z)$ reads as follows:
\[
\Delta_x(z)=H_{x+1}(z)^2- H_{x}(z)H_{x+2}(z) \ge 0,\quad n\ge0,\ z\in\mathbb{R}.
\]
Then an annihilating operator of $\Delta_x(z)$
is $L_h:=\tau^3+(2x+2-4z^2)\tau^2-4(x+2)(x-2z^2+4)\tau-8(1+x)(x+2)^2$ and the local
data of this operator is
\[
\begin{aligned}
&\Gq(L_h)=  \left\{ -1\pm\sqrt {-2\,{z}^{2}}t^\frac12+ \left( {z}^{2}\pm1 \right) t,
1\pm2\,\sqrt {-2\,{z}^{2}}t^\frac12-4\,{z}^{2}t \right\} 
,\\
& \valg(L_h)=\left\{ (0, 2) \right\}.
\end{aligned}
\]

$-1\pm\sqrt {-2\,{z}^{2}}t^\frac12+ \left( {z}^{2}\pm1 \right)t$ are elements in $\Gq(L_h)$ and
these are equivalent to $-1\pm\sqrt {-2\,{z}^{2}}t^\frac12+  {z}^{2}t$ under $\sim_2$, see Definition~\ref{def:req} for $\sim_2$. 
Thus the local data of $L_h$ correspond to those of the third entry of the table given in Section~\ref{sec:base}.

Using the algorithm described above a gauge transformation can be found that applied to 
$H_x(z)^2$ yields the following equivalent formulation
\[
\Delta_x(z) = \tfrac12 H_{x+1}^2(z)+2(x+1-z^2) H_x(z)^2 + 2x^2 H_{x-1}^2(x).
\]
This representation gives the positivity of Tur\'{a}n's inequality for
$$z\in[-\sqrt{x+1},\sqrt{x+1}\ ],\quad x\ge0.$$

\bibliographystyle{plain}
\bibliography{myrefs}

\end{document}